\documentclass[11pt]{article}

\usepackage[margin=1in]{geometry}

\usepackage{amsmath}
\usepackage{amssymb}
\usepackage{amsthm}
\usepackage{mathrsfs}
\usepackage{relsize}
\usepackage{bm}
\usepackage{xifthen}
\usepackage{booktabs}
\usepackage{graphicx}
\usepackage{xcolor}

\usepackage[authoryear,round]{natbib}

\usepackage{titlesec}
\titleformat{\section}[block]
  {\filcenter\normalfont\LARGE\bfseries}
  {\thesection.}{0.6em}{}
\titlespacing*{\section}{0pt}{1.8\baselineskip}{0.8\baselineskip}

\usepackage{hyperref}
\hypersetup{
  colorlinks = true,
  linkcolor  = {blue!50!black},
  citecolor  = {blue!50!black},
  urlcolor   = {red!60!black},
  pdfstartview = {Fit}
}

\newcommand{\tmat}[1]{%
	\ifthenelse{\equal{#1}{W}}{\bm{{\widetilde{\mathrm{#1}}}}}%
	{\bm{{\tilde{\mathrm{#1}}}}}%
}

\newcommand{\argmin}{\operatorname*{arg\,min}}

\newtheorem{theorem}{Theorem}
\newtheorem{assumption}{Assumption}

\makeatletter
\newcounter{@remarkctr}

\makeatother

\title{Robust Estimation of Polychoric Correlation for Complex Survey Designs\\
Using Minimum Divergence Methods}
\author{%
  \textbf{Siqi Wei}\\
  Department of Statistics\\
  George Mason University\\
  Fairfax, VA, USA
  \and
  \textbf{David Kepplinger}\\
  Department of Statistics\\
  George Mason University\\
  Fairfax, VA, USA
  \and
  \textbf{Anand N.\ Vidyashankar}\\
  Department of Statistics\\
  George Mason University\\
  Fairfax, VA, USA}
\date{}

\begin{document}
\maketitle

\begin{abstract}

Standard maximum likelihood estimation of polychoric correlations is highly sensitive to contamination in survey data, including response errors, interviewer effects, and careless responding, yet assigns equal weight to all observations regardless of data quality. We develop robust estimators for polychoric correlation under complex survey designs based on two minimum divergence criteria---Hellinger distance (HD) and negative exponential disparity (NED)---incorporating survey weights through Horvitz--Thompson adjusted cell frequencies. For HD, we propose penalized Ridge and Lasso variants that regularize nuisance parameters while leaving the correlation unpenalized, and establish consistency and asymptotic normality with a sandwich covariance reflecting the sampling design. The influence function is finite but not uniformly bounded, reflecting Hellinger's sensitivity to sparse cells. Simulations under Poisson proportional-to-size sampling examine three contamination geometries---concordant upper, concordant lower, and discordant mixed corner---crossed with standard and non-standard latent marginals. The two estimator classes offer complementary advantages: penalized HD methods achieve the lowest mean squared error under concordant contamination, while NED performs best under discordant contamination and under compound misspecification--contamination effects. We provide practical guidelines for method selection based on anticipated contamination patterns in survey practice.
\end{abstract}

\noindent\textit{Keywords}:~robust estimation, Hellinger distance, polychoric correlation, survey weights, complex survey design, contamination.

\bigskip

\section{Introduction}\label{sec:introduction}

Ordinal categorical data are ubiquitous in modern quantitative research, particularly in psychology, education, health sciences, and social surveys.
When researchers measure attitudes, behaviors, or latent constructs using Likert scales, rating scales, or ordered response categories, the observed discrete responses are typically conceptualized as coarse measurements of underlying continuous variables.
The polychoric correlation coefficient formalizes this intuition by estimating the correlation between two latent continuous variables that have been discretized through threshold mechanisms \citep{Pearson1900, Olsson1979}, and serves as a foundational input to structural equation modeling, confirmatory factor analysis, and psychometric scale construction.
Standard maximum likelihood (ML) estimation of polychoric correlations is efficient under correct model specification, but highly sensitive to contamination prevalent in survey data, including response errors, interviewer effects, and careless responding \citep{Krosnick1991, Groves2009, Huang2012}.
In complex survey designs with unequal inclusion probabilities, survey weights can further amplify the influence of contaminated observations \citep{Hulliger1995}, making robust estimation essential.

\citet{WelzMairAlfons2026} recently proposed the E-estimator for the polychoric correlation, focusing on robustness against contamination concentrated in discordant cells such as $(1,5)$ or $(5,1)$, a pattern that can arise from careless responses to reverse-keyed items.
Survey data, however, also exhibit another class of model violations in which uninformative responses cluster in concordant corner cells such as $(1,1)$ or $(5,5)$.
We refer to this pattern as \emph{concordant contamination}, which can be caused by certain behavioral or psychological mechanisms.
For example, extreme response style describes respondents who systematically endorse the most extreme options regardless of item content, producing simultaneous mass at both diagonal extremes \citep{Greenleaf1992, VanVaerenberghThomas2013}.
Acquiescence bias, i.e., the tendency to agree across items, generates upper-corner contamination when reverse-keyed items are absent \citep{Smith1967, BillietMcClendon2000}.
In surveys of bounded variables such as income or wealth, selective non-response and underreporting at one tail distort the marginal extremes \citep{LillardSmithWelch1986, Bollinger1998}.
The direction of contamination has methodological consequences: robust loss functions downweight large Pearson residuals through distinct mechanisms, and an estimator that performs well under discordant contamination need not retain that advantage under concordant contamination.

Minimum divergence estimation offers a principled framework for robust inference across all these contamination patterns.
Within the broad family of divergence measures \citep{Lindsay1994, Basu2011}, Hellinger distance (HD) and the negative exponential disparity (NED) are particularly attractive.
The residual-adjustment function (RAF) of the Hellinger divergence grows at a slower rate (square-root growth) than the Kullback-Leibler divergence associated with the maximum likelihood estimator (linear growth), making minimum HD estimators less affected by contamination than MLE while still achieving full efficiency \citep{Beran1977, Simpson1987}.
NED, on the other hand, has a bounded RAF, and hence it is highly robust against gross contamination without requiring tuning constants.
\citet{KepplingerVidyashankar2026} recently proposed minimum Hellinger distance estimators for continuous superpopulation models under complex survey designs, incorporating Horvitz--Thompson adjusted kernel density estimators as nonparametric plug-ins.

In this paper, we extend divergence-based robust estimation to ordinal survey data, where the weighted empirical cell frequency serves as a natural nonparametric plug-in requiring no bandwidth selection.
We develop minimum HD and NED estimators with penalized Ridge and Lasso variants that regularize nuisance parameters (location and scale) while leaving the correlation parameter $\rho$ unpenalized.

Our contributions are threefold.
First, under correct specification we establish consistency and asymptotic normality of the HD estimator based on Horvitz--Thompson weighted cell 
frequencies, yielding a sandwich covariance whose middle matrix is evaluated under the complex sampling design.
Under fixed contamination or marginal misspecification, the same argument demonstrates convergence to 
the corresponding minimum-divergence pseudo-true parameter rather than to the clean-model parameter.
Second, we characterize how the contrasting residual-adjustment mechanisms of HD and NED interact with contamination geometry and cell sparsity under complex sampling.
HD's sensitivity involves inverse-square-root factors $1/\sqrt{\pi_{ij}}$, which stabilize moderate-probability cells but 
increase sensitivity in very sparse cells.
NED, in contrast, uses a bounded residual-adjustment function applied to the Pearson residual $\delta_{ij}(\boldsymbol{\psi}) = \hat{p}_{ij}^{w}/\pi_{ij}(\boldsymbol{\psi}) - 1$, preventing cells with large empirical excess relative to the fitted model from exerting linearly increasing influence on the estimating equation. 
This distinction drives qualitatively different robustness profiles depending on whether contamination targets concordant or discordant cells in the contingency table.
Third, through comprehensive simulations under Poisson proportional-to-size (PPS) sampling with three contamination geometries (upper, lower, and mixed corner) crossed with two marginal specifications (standard and non-standard), we demonstrate that the estimators offer complementary finite-sample strengths.
Penalized HD methods perform best under concordant contamination with correctly specified marginals---the scenarios introduced in this paper.
NED and the E-estimator are more stable under sparse discordant-cell contamination, corresponding to the setting of \citet{WelzMairAlfons2026}, and under compound misspecification--contamination effects.
We provide practical guidance for method selection based on anticipated contamination patterns in survey practice.
Finally, we apply the estimators to validate self-reported versus measured body mass index in the National Health and Nutrition Examination Survey (NHANES).

\section{Methodology}
\label{sec:methodology}
We now describe the modeling framework for polychoric correlation.
This is followed by our proposed minimum divergence estimators (MDEs).

\subsection{The polychoric model}\label{sec:notation}

We consider a finite population $U_\gamma = \{1, \ldots, N_\gamma\}$ of units whose characteristics of interest $(Y_{1i}, Y_{2i})$ are i.i.d.\ draws from a superpopulation distribution $G$ with joint probability mass function $g$.
Here, $Y_1, Y_2$ denote ordinal categorical variables taking values in $\{1, \ldots, K_l\}, l \in \{1, 2\}$.
For notational simplicity, we assume $K_1=K_2=K$, but the methodology and the following theory are more general.
A sample $S_\gamma \subset U_\gamma$ of size $n_\gamma$ is drawn according to a survey design with first-order inclusion probabilities $P(i \in S_\gamma) > 0$ and corresponding survey weights $w_{\gamma i} = 1/P(i \in S_\gamma)$.
For fixed-size designs, we denote the sampling ratio as $\alpha_\gamma = n_\gamma / N_\gamma < 1$.

The polychoric model posits that observed ordinal responses arise from underlying continuous latent variables $Z_1^*, Z_2^*$ through a thresholding mechanism: for variable $\ell\in\{1,2\}$, $Y_{\ell i}=k$ if $b_{\ell,k-1}<Z_{\ell i}^*\leq b_{\ell,k}$, where $b_{\ell,0}=-\infty$, $b_{\ell,K}=+\infty$, and $b_{\ell,1}<\cdots<b_{\ell,K-1}$ are fixed thresholds.
We follow the common parametric polychoric model in which the latent variables in the superpopulation are bivariate Gaussian, i.e., $(Z_1^*, Z_2^*)^\top \sim N(\boldsymbol{\mu}, \boldsymbol{\Sigma})$ with
\begin{equation*}
\boldsymbol{\mu} = \begin{pmatrix} \theta_1 \\ \theta_2 \end{pmatrix}, \quad
\boldsymbol{\Sigma} = \begin{pmatrix} \sigma_1^2 & \rho\sigma_1\sigma_2 \\ \rho\sigma_1\sigma_2 & \sigma_2^2 \end{pmatrix}.
\end{equation*}

Without any further assumptions, however, the latent process model is not identifiable.
In practice, either the parameters of the latent Gaussian distribution or the discretization thresholds must be fixed, and each avenue is appropriate in different scenarios.
The classical formulation of \citet{Olsson1979} assumes a latent standard bivariate Normal distribution $(\theta_l, \sigma_l) = (0, 1)$ and estimates only the thresholds.
This is most appropriate when the ordinal response categories are subjective gradations of an unobservable continuous attribute, as in Likert-type psychometric items or attitude scales.
Moreover, the number of categories $K$ must not be too large as the dimensions of the parameter space scales with $K$.
An alternative parametrization is where the thresholds are fixed and the parameters of the latent Gaussian distribution are estimated.
This is more appropriate when the response categories are intervals on an externally measurable continuous variable, with cut-points determined by the survey instrument or by administrative criteria (e.g., income ranges).
A detailed comparison, including the empirical contexts under which each is most appropriate, is provided in Appendix~\ref{supp:param}.

In this paper we focus on parametrization where the latent marginal parameters $(\theta_1, \theta_2, \sigma_1, \sigma_2)$ are estimated jointly with the correlation coefficient~$\rho$, while the discretization thresholds $\{b_{\ell,k}\}$ are treated as fixed quantities determined by the study design.
We refer to the parameter vector by $\boldsymbol{\psi} = (\theta_1, \theta_2, \sigma_1, \sigma_2, \rho)^\top \in \Psi \subset \mathbb{R}^5$.
This parametrization is also more amenable to regularization, as introduced in Section~\ref{sec:penalized-variants}, because standard Gaussian marginals are a natural anchor.

Model-predicted cell probabilities are denoted by $\pi_{ij}(\boldsymbol{\psi})$ with $\boldsymbol{\pi}(\boldsymbol{\psi}) = (\pi_{11}(\boldsymbol{\psi}), \ldots, \pi_{KK}(\boldsymbol{\psi}))$. The theoretical probability for cell $(i,j)$ is
\begin{equation*}
\pi_{ij}(\boldsymbol{\psi}) = \Phi_2(b_{1,i}, b_{2,j}; \boldsymbol{\mu}, \boldsymbol{\Sigma})  - \Phi_2(b_{1,i-1}, b_{2,j}; \boldsymbol{\mu}, \boldsymbol{\Sigma})  - \Phi_2(b_{1,i}, b_{2,j-1}; \boldsymbol{\mu}, \boldsymbol{\Sigma}) + \Phi_2(b_{1,i-1}, b_{2,j-1}; \boldsymbol{\mu}, \boldsymbol{\Sigma})
\end{equation*}
where $\Phi_2(\cdot, \cdot; \boldsymbol{\mu}, \boldsymbol{\Sigma})$ denotes the bivariate Gaussian cumulative distribution function.
The Horvitz--Thompson (HT) weighted empirical cell frequencies are
\begin{equation}
\label{eq:ht_freq}
\hat{p}_{ij}^w = \frac{\sum_{k \in S} w_k \, \mathbb{I}(Y_{1k} = i,\, Y_{2k} = j)}{\sum_{k \in S} w_k},
\end{equation}
with indicator function $\mathbb{I}$ and $\hat{\boldsymbol{\pi}}^w = (\hat{p}_{11}^w, \ldots, \hat{p}_{KK}^w)$.

\subsection{Minimum divergence estimation}\label{sec:divergence}

Given the parametric family $\mathcal{F} = \{f_{\boldsymbol{\psi}} : \boldsymbol{\psi} \in \Psi\}$ and a divergence measure $D(f, g)$ between two discrete distributions with probability mass functions $f$ and $g$, the minimum divergence estimator minimizes $D(f_{\boldsymbol{\psi}}, \hat{\boldsymbol{\pi}}^w)$ over $\Psi$.
Different choices of $D$ yield estimators with distinct estimation and robustness properties.
We consider two divergence measures that do not require tuning constants and achieve high asymptotic efficiency under correct specification and high robustness \citep{Beran1977, Simpson1987, Lindsay1994, Basu2011}: the Hellinger distance (HD) and the Negative Exponential Disparity (NED).
Given tables of cell frequencies $\{f_{ij} \colon i,j=1,\dotsc,K\}$ and 
$\{g_{ij} \colon i,j=1,\dotsc,K\}$, these two divergences are defined as
\begin{align*}
D^2_\text{HD}(f, g) &= \frac{1}{2}\sum_{i,j=1}^K \left(\sqrt{f_{ij}} - \sqrt{g_{ij}}\right)^2, \\
D_\text{NED}(f, g) &= \sum_{i,j=1}^{K} g_{ij} \left[\exp\{-\delta_{ij}\} - 1 + \delta_{ij}\right],
\end{align*}
where $\delta_{ij} = f_{ij}/g_{ij} - 1$ is the Pearson residual in cell $(i,j)$. 
The linear term in $D_\text{NED}$ makes the nonnegativity of the criterion 
explicit. Since 
$\sum_{i,j} g_{ij}\,\delta_{ij} = \sum_{i,j} f_{ij} - \sum_{i,j} g_{ij} = 0$, 
this criterion has the same minimizer as the classical form 
$\sum_{i,j} g_{ij}\left[\exp\{-\delta_{ij}\} - 1\right]$.
The minimum Hellinger distance estimator (MHDE) and minimum NED estimator 
(MNEDE) are therefore
\begin{align*}
\hat{\boldsymbol{\psi}}_\text{HD}  &= \argmin_{\boldsymbol{\psi} \in \Psi} 
                                    D_\text{HD}(\hat{\boldsymbol{\pi}}^w, \boldsymbol{\pi}(\boldsymbol{\psi})), \\
\hat{\boldsymbol{\psi}}_\text{NED} &= \argmin_{\boldsymbol{\psi} \in \Psi} 
                                    D_\text{NED}(\hat{\boldsymbol{\pi}}^w, \boldsymbol{\pi}(\boldsymbol{\psi})).
\end{align*}

Large positive values of $\delta_{ij}$ occur when the HT-weighted empirical 
frequency in a cell substantially exceeds the model-implied probability. 
The corresponding residual-adjustment function in the NED estimating 
equation is bounded; equivalently, the exponential factor 
$\exp\{-\delta_{ij}\}$ prevents the cellwise score contribution from growing 
linearly with the residual. NED is therefore more protective against cells 
with observed excess relative to the fitted model, including sparse discordant 
cells under positive latent correlation. This contrasts with HD, where the 
influence of cell frequencies on the parameter estimates scales as 
$1/\sqrt{\pi_{ij}}$, which stabilizes moderate-probability cells but amplifies 
sensitivity in sparse cells.
These estimators are straightforward to compute.
We provide computational details in Appendix~\ref{supp:computation}.

\subsection{Penalized variants}\label{sec:penalized-variants}

For any divergence $D$, the penalized minimum divergence estimator minimizes $D(\hat{p}^w, \pi(\boldsymbol{\psi})) + \lambda P(\boldsymbol{\psi})$, where $P$ penalizes the nuisance parameters $(\theta_1, \theta_2, \sigma_1, \sigma_2)$ towards the standardized model.
The correlation parameter $\rho$---our primary quantity of interest---is not penalized.
The penalty has the added benefit of dampening the influence of pathological outliers on the MHDE by reducing the influence of sparse cell counts.
Here we specifically consider the Ridge and the Lasso penalty, given by
\begin{align*}
P_\text{Ridge}(\boldsymbol{\psi}) &= \theta_1^2 + \theta_2^2 + [\log(\sigma_1)]^2 + [\log(\sigma_2)]^2,\\
P_\text{Lasso}(\boldsymbol{\psi}) &= |\theta_1| + |\theta_2| + |\log(\sigma_1)| + |\log(\sigma_2)|. 
\end{align*}
In our empirical studies in Section~\ref{sec:simulations} we select the regularization parameter $\lambda$ by cross-validation.
Moreover, we apply penalization only to MHDE, since the exponential suppression of large residuals by NED already provides an intrinsic form of regularization against contamination in sparse cells. Although $\rho$ is excluded from the penalty, the joint estimation of 
$\boldsymbol{\psi}$ implies that regularization of the nuisance parameters 
$(\theta_1, \theta_2, \sigma_1, \sigma_2)$ can still influence $\hat\rho$ 
indirectly through their coupling in the divergence objective. We do not 
develop a separate asymptotic theory for the penalized estimator: as 
$\lambda \to 0$ it reduces to the unpenalized MHDE and inherits its 
limiting behavior; for fixed $\lambda > 0$ it instead targets a penalized 
pseudo-true parameter that may differ from the unpenalized minimizer. 
We therefore treat the penalized variants as finite-sample regularization 
devices and evaluate their performance through the simulation study in 
Section~\ref{sec:simulations}.

\subsection{Asymptotic theory}

The asymptotic properties of minimum divergence estimators in the survey-weighted polychoric model follow from standard M-estimation theory. We state the results for MHDE; analogous results hold for MNEDE under the same regularity conditions, with the Hellinger kernel replaced by the negative exponential disparity kernel in the sandwich matrices $\mathbf{A}$ and $\boldsymbol{\Sigma}$.

Focusing on the set of all bivariate Gaussian distributions, $\mathcal F = \{ f_{\boldsymbol{\psi}} \colon \boldsymbol{\psi} \in \boldsymbol{\Psi}  \}$, we impose the following assumptions on the model and estimation framework.

\begin{assumption}[Superpopulation model]\label{assump:superpopulation}
  \((\mathbf{Z}_{\gamma i}, X_{\gamma i})\) are i.i.d.\ across \(i \in U_\gamma\) with $\mathbf{Z}_{\gamma i}\sim g \in L^1(\mathbb R^2)$.
  The observed categorical $Y_{\gamma i}$ arise from the fixed thresholding mechanism defined in Section~\ref{sec:notation}.
  The design may depend on \(X\) but not directly on \(\mathbf{Z}\) given \(X\) (PPS).
\end{assumption}

\begin{assumption}[Design regularity]\label{assump:design-regularity}
  There exists $0 < c_0<\infty$ such that
  $$
    \lim_{\gamma\to\infty}\,
    \Pr\left(
    \max_{i \in U_\gamma} w_{\gamma i}\leq c_0/\alpha_\gamma
    \right)=1.
  $$
  Equivalently, we write $\ \max_i w_{\gamma i}=O_p(1/\alpha_\gamma)$.
  To satisfy this assumption in applications, extremely large inverse inclusion weights can be truncated.
\end{assumption}

\begin{assumption}[Model identifiability]\label{assump:ident}
The parameter $\boldsymbol{\psi}_g$ uniquely minimizes $D(f_{\boldsymbol{\psi}}, g)$ over $\boldsymbol{\psi} \in \Psi$, and for each $\epsilon > 0$, $\inf_{\|\boldsymbol{\psi} - \boldsymbol{\psi}_g\| \geq \epsilon} D(f_{\boldsymbol{\psi}}, g) \geq D(f_{\boldsymbol{\psi}_g}, g) + \Delta(\epsilon)$ for some $\Delta(\epsilon) > 0$.
\end{assumption}

\begin{assumption}[Contamination model for robustness analysis]\label{assump:contam}
For robustness analysis, we consider contaminated superpopulation cell
distributions of the form
\[
g_{\varepsilon} = (1-\varepsilon)\,g_{0} + \varepsilon\, h, 
\qquad 0 \leq \varepsilon < 1,
\]
where $g_{0}$ is the uncontaminated cell distribution and $h$ is an arbitrary
cell distribution on $\{1,\ldots,K\}^{2}$. This assumption is used only to
interpret the population target, robustness properties, and influence-function
behavior under contamination. It is not required for the clean-model 
consistency statement in Theorem~\ref{thm:hd_consist}. Under fixed 
contamination, the corresponding population target is the divergence projection
\[
\boldsymbol{\psi}_{\varepsilon} 
= \argmin_{\boldsymbol{\psi}\in\Psi} D\{f_{\boldsymbol{\psi}}, g_{\varepsilon}\},
\]
which generally need not equal the uncontaminated parameter 
$\boldsymbol{\psi}_{0}$.
\end{assumption}

\begin{assumption}[Smoothness and interiority]\label{assump:smooth}
The probability mass function $f_{\boldsymbol{\psi}}$ is twice continuously differentiable in $\boldsymbol{\psi}$ for each cell $(i,j)$, with derivatives satisfying uniform integrability conditions. The true parameter $\boldsymbol{\psi}_0$ lies in the interior of $\Psi$.
\end{assumption}

We now state the main results for our minimum divergence estimators.

\begin{theorem}[Consistency]\label{thm:hd_consist}
Under Assumptions~\ref{assump:superpopulation}, 
\ref{assump:design-regularity}, \ref{assump:ident}, and 
\ref{assump:smooth}, and under correct specification 
$g = f_{\boldsymbol{\psi}_{0}}$,
\[
\hat{\boldsymbol{\psi}}_{\mathrm{HD}} \xrightarrow{P} \boldsymbol{\psi}_{0}.
\]
\end{theorem}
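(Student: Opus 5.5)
The argument is the standard one for minimum-distance estimators with a finite-dimensional nonparametric plug-in. First we show that the weighted cell frequencies are consistent, $\hat{\boldsymbol{\pi}}^w \xrightarrow{P} \boldsymbol{\pi}(\boldsymbol{\psi}_0)$. Then we transfer this to the estimator through continuity of the Hellinger functional and the separation condition in Assumption~\ref{assump:ident}.

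\textbf{Step 1: consistency of the weighted cell frequencies.} Since the table has only $K^2$ cells, it suffices to treat each cell $(i,j)$ separately. Write
\[
\hat p^w_{ij} = \frac{N_\gamma^{-1}\sum_{k\in S} w_k\,\mathbb I_{ijk}}{N_\gamma^{-1}\sum_{k\in S} w_k},
\qquad \mathbb I_{ijk}=\mathbb I(Y_{1k}=i,\,Y_{2k}=j).
\]
We treat numerator and denominator as Horvitz--Thompson estimators of the population means $N_\gamma^{-1}\sum_{k\in U_\gamma}\mathbb I_{ijk}$ and $1$.

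\begin{itemize}
\item \emph{Population level.} By Assumption~\ref{assump:superpopulation} the population indicators are i.i.d.\ Bernoulli$(g_{ij})$. The weak law of large numbers then gives that the population mean converges to $g_{ij}=\pi_{ij}(\boldsymbol{\psi}_0)$.
\item \emph{Design level, conditional on the population and $X$.} The HT estimator is design-unbiased because $w_k$ depends only on $X$ (PPS).
\item \emph{Design variance.} Under a Poisson-type design the design variance is at most $N_\gamma^{-2}\sum_{k\in U_\gamma} w_k\,\mathbb I_{ijk} \le \max_k w_k / N_\gamma$. By Assumption~\ref{assump:design-regularity} this is $O_p\bigl(1/(\alpha_\gamma N_\gamma)\bigr)=O_p(1/n_\gamma)\to 0$.
\end{itemize}

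For general designs we would add the usual condition that the second-order inclusion probabilities are controlled, which we take as implicit in the design-regularity assumption. Chebyshev's inequality conditional on the population, followed by dominated convergence, gives design consistency of both ratios. Slutsky's theorem then yields $\hat p^w_{ij}\xrightarrow{P} g_{ij}$.

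\textbf{Step 2: uniform convergence of the criterion.} Let $Q_\gamma(\boldsymbol{\psi}) = D_{\mathrm{HD}}(\hat{\boldsymbol{\pi}}^w,\boldsymbol{\pi}(\boldsymbol{\psi}))$ and $Q(\boldsymbol{\psi})=D_{\mathrm{HD}}(g,\boldsymbol{\pi}(\boldsymbol{\psi}))$. Use the triangle inequality for the Hellinger metric, together with the elementary bound $(\sqrt a-\sqrt b)^2\le |a-b|$. This gives
\[
\sup_{\boldsymbol{\psi}\in\Psi}|Q_\gamma(\boldsymbol{\psi})-Q(\boldsymbol{\psi})|
\le D_{\mathrm{HD}}(\hat{\boldsymbol{\pi}}^w,g)
\le \Bigl(\tfrac12\sum_{i,j}|\hat p^w_{ij}-g_{ij}|\Bigr)^{1/2}
\xrightarrow{P}0 .
\]
The bound does not depend on $\boldsymbol{\psi}$, so uniformity is automatic. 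No compactness or smoothness of $\boldsymbol{\pi}(\cdot)$ is needed at this stage. This is the advantage of a bounded metric on a finite simplex.

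\textbf{Step 3: argmin argument.} Under correct specification $Q(\boldsymbol{\psi}_0)=0$, and Assumption~\ref{assump:ident} identifies $\boldsymbol{\psi}_g=\boldsymbol{\psi}_0$ as the well-separated minimizer. Because $\hat{\boldsymbol{\psi}}_{\mathrm{HD}}$ minimizes $Q_\gamma$ (an approximate minimizer, within $o_p(1)$, also suffices),
\[
Q(\hat{\boldsymbol{\psi}}_{\mathrm{HD}})\le Q_\gamma(\hat{\boldsymbol{\psi}}_{\mathrm{HD}})+o_p(1)\le Q_\gamma(\boldsymbol{\psi}_0)+o_p(1)\le Q(\boldsymbol{\psi}_0)+o_p(1).
\]
On the event $\|\hat{\boldsymbol{\psi}}_{\mathrm{HD}}-\boldsymbol{\psi}_0\|\ge\epsilon$ we instead have $Q(\hat{\boldsymbol{\psi}}_{\mathrm{HD}})\ge Q(\boldsymbol{\psi}_0)+\Delta(\epsilon)$. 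This event therefore has probability tending to zero, which proves the claim.

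\textbf{Main obstacle.} We expect Step 1 to be the hardest, specifically the design-level variance bound. The difficulty is that the weights are random, through $X$, and the design is not specified beyond PPS. The plan is to condition on $(X,\mathbf Z)$ for the whole population. We then bound the variance using independence of the Poisson inclusions, and use $\max_k w_k = O_p(1/\alpha_\gamma)$ only on the event where the bound holds; that event has probability tending to one. The smoothness in Assumption~\ref{assump:smooth} is not needed for consistency itself. Existence of the minimizer can be ensured by noting that $Q_\gamma$ is continuous in $\boldsymbol{\psi}$, or else handled by the approximate-minimizer formulation.
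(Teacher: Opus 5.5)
Your proposal is correct and follows the same route as the paper's proof: cellwise consistency of the ratio-form Horvitz--Thompson frequencies (numerator to $g_{ij}$, denominator to $1$), then uniform convergence of the finite-sum Hellinger criterion, then the argmin theorem under the separation condition of Assumption~\ref{assump:ident}, and finally identification of $\boldsymbol{\psi}_g=\boldsymbol{\psi}_0$ under correct specification. You supply details the paper only asserts: the Poisson design-variance bound $\max_k w_k/N_\gamma=O_p(1/n_\gamma)$, with the fair caveat that general designs also need control of second-order inclusion probabilities, and a $\boldsymbol{\psi}$-free bound from the triangle inequality for the unsquared Hellinger metric, whereas the paper works with the squared criterion and appeals to continuity of $\sqrt{\cdot}$ on $[0,1]$.
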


\begin{proof}
We prove consistency by viewing the estimator as the minimizer of a 
finite-dimensional weighted empirical criterion. The weighted empirical 
Hellinger criterion is
\[
Q_{\gamma}(\boldsymbol{\psi}) 
= \frac{1}{2} \sum_{i,j=1}^{K} 
\left( \sqrt{\hat p_{\gamma,ij}^{w}} 
       - \sqrt{\pi_{ij}(\boldsymbol{\psi})} \right)^{2},
\]
and the corresponding population criterion is
\[
Q(\boldsymbol{\psi}) 
= \frac{1}{2} \sum_{i,j=1}^{K} 
\left( \sqrt{g_{ij}} - \sqrt{\pi_{ij}(\boldsymbol{\psi})} \right)^{2}.
\]
Since $K$ is fixed, it is enough to show
$\max_{1\leq i,j\leq K} | \hat p_{\gamma,ij}^{w} - g_{ij} | \xrightarrow{P} 0$.
The ratio-normalized Horvitz--Thompson cell frequency in~\eqref{eq:ht_freq} is the ratio form of the HT estimator of the population cell proportion.
Equivalently, after dividing numerator and denominator by \(N_\gamma\),
\[
\hat p_{\gamma,ij}^{w}
=
\frac{
N_\gamma^{-1}\sum_{k\in U_\gamma}
\delta_{\gamma k}w_{\gamma k}
\mathbf 1\{Y_{1k}=i,Y_{2k}=j\}
}{
N_\gamma^{-1}\sum_{k\in U_\gamma}
\delta_{\gamma k}w_{\gamma k}
}.
\]
Under Assumptions~\ref{assump:superpopulation} and
\ref{assump:design-regularity}, the normalized numerator converges in
probability to \(g_{ij}\), the normalized denominator converges in probability
to \(1\), and therefore
$\hat p_{\gamma,ij}^{w} \xrightarrow{P} g_{ij}$
for each cell $(i,j)$. 
Because the number of cells $K^{2}$ is fixed,
$\max_{1\leq i,j\leq K} | \hat p_{\gamma,ij}^{w} - g_{ij} | \xrightarrow{P} 0$.
The map $p \mapsto \sqrt{p}$ is continuous on $[0,1]$, and the criterion is 
a finite sum. Hence
$\sup_{\boldsymbol{\psi}\in\Psi} 
| Q_{\gamma}(\boldsymbol{\psi}) - Q(\boldsymbol{\psi}) | \xrightarrow{P} 0$.
By the separation condition in Assumption~\ref{assump:ident}, the population criterion $Q$ has a unique minimizer, denoted by $\boldsymbol{\psi}_{g} = \argmin_{\boldsymbol{\psi}\in\Psi} Q(\boldsymbol{\psi})$.
The standard argmin theorem then gives
$\hat{\boldsymbol{\psi}}_{\mathrm{HD}} 
= \argmin_{\boldsymbol{\psi}\in\Psi}Q_{\gamma}(\boldsymbol{\psi})
\xrightarrow{P} \boldsymbol{\psi}_{g}$.
Under the assumed correct model specification there exists a $\bm\psi_0$ such that $g = f_{\boldsymbol{\psi}_{0}}$.
Combined with the identifiability condition in Assumption~\ref{assump:ident} this yields $\hat{\boldsymbol{\psi}}_{\mathrm{HD}} \xrightarrow{P} \boldsymbol{\psi}_{0}$.
\end{proof}

Before stating the asymptotic normality result, we discuss the design-specific ingredient needed for the finite-dimensional ordinal model.
Let
\[
\hat{\mathbf p}_{\gamma}^{w} = (\hat p_{\gamma,11}^{w},\ldots,\hat p_{\gamma,KK}^{w})^\top,
\qquad
\mathbf g = (g_{11},\ldots,g_{KK})^\top .
\]
Assume that, under the joint superpopulation and sampling-design distribution,
\[
\sqrt{n_\gamma}\,(\hat{\mathbf p}_{\gamma}^{w} - \mathbf g) 
\Rightarrow N(\mathbf 0, \mathbf V_d),
\]
where $\mathbf V_d$ is the limiting design-adjusted covariance matrix of the
ratio-normalized Horvitz--Thompson cell-frequency vector. For SRS-WOR,
$\mathbf V_d$ contains the usual finite-population correction; for Poisson-PPS
or other unequal-probability designs, $\mathbf V_d$ is the corresponding
Horvitz--Thompson design covariance.

\begin{theorem}[Asymptotic normality]\label{thm:hd_asymptotic}
Assume the conditions of Theorem~\ref{thm:hd_consist}. In addition, suppose
that the population minimizer $\bm{\psi}_g$ lies in the interior of
$\Psi$, that every cell has positive population probability $g_{ij} > 0$, and
that the map $\bm{\psi} \mapsto \bm{\pi}(\bm{\psi})$ 
is twice continuously differentiable in a neighborhood of $\bm{\psi}_g$.
Define the Hellinger estimating equation
$\mathbf U(\bm{\psi}, \mathbf p) = \nabla_{\bm{\psi}} D_{\mathrm{HD}}(\bm{\pi}(\bm{\psi}), \mathbf p)$
and assume
\[
\mathbf A 
= \nabla_{\bm{\psi}} \mathbf U(\bm{\psi}, \mathbf g)\Big|_{\bm{\psi} = \bm{\psi}_g}
\]
is nonsingular, and that
$\sqrt{n_\gamma}\,(\hat{\mathbf p}_{\gamma}^{w} - \mathbf g) 
\Rightarrow N(\mathbf 0, \mathbf V_d)$.
Let
$\mathbf J_g 
= \left.\nabla_{\mathbf p} \mathbf U(\boldsymbol{\psi}_g, \mathbf p)\right|_{\mathbf p = \mathbf g}$
denote the Jacobian of the Hellinger estimating equation with respect to the cell-probability vector.
Then
\[
\sqrt{n_\gamma}\,(\hat{\bm{\psi}}_{\mathrm{HD}} - \boldsymbol{\psi}_g) 
\Rightarrow N\!\left(\mathbf 0, \mathbf A^{-1} \mathbf B_d \mathbf A^{-\top}\right),
\quad \text{where} \quad
\mathbf B_d = \mathbf J_g \mathbf V_d \mathbf J_g^\top.
\]
Under correct specification, $\mathbf g = \boldsymbol{\pi}(\boldsymbol{\psi}_0)$
and $\boldsymbol{\psi}_g = \boldsymbol{\psi}_0$, so the same result holds
centered at $\boldsymbol{\psi}_0$.
\end{theorem}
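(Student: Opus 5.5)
The plan is the standard Z-estimator delta-method argument, made easy by the finite dimension of the cell vector. First I use Theorem~\ref{thm:hd_consist}, which applies verbatim with $\bm\psi_g$ in place of $\bm\psi_0$ since only uniform convergence of the criterion and Assumption~\ref{assump:ident} were used. It gives $\hat{\bm\psi}_{\mathrm{HD}}\xrightarrow{P}\bm\psi_g$. Because $\bm\psi_g$ is interior, with probability tending to one the minimizer lies in an open neighbourhood of $\bm\psi_g$ on which $\bm\pi$ is $C^2$. There it satisfies the first-order condition $\mathbf U(\hat{\bm\psi}_{\mathrm{HD}},\hat{\mathbf p}^w_\gamma)=\mathbf 0$. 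At the population level, $\mathbf U(\bm\psi_g,\mathbf g)=\mathbf 0$ because $\bm\psi_g$ is an interior minimizer of $Q$.

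Next I check smoothness of $\mathbf U$ jointly in $(\bm\psi,\mathbf p)$ near $(\bm\psi_g,\mathbf g)$. Explicitly,
\[
\mathbf U(\bm\psi,\mathbf p)=\tfrac12\sum_{i,j}\Bigl(1-\sqrt{p_{ij}/\pi_{ij}(\bm\psi)}\Bigr)\nabla_{\bm\psi}\pi_{ij}(\bm\psi).
\]
This map is $C^1$ wherever $p_{ij}>0$ and $\pi_{ij}(\bm\psi)>0$. The assumption $g_{ij}>0$ is exactly what guarantees the first condition on a neighbourhood of $\mathbf g$. The second holds because Gaussian rectangle probabilities are strictly positive, and the first holds with probability tending to one since $\hat{\mathbf p}^w_\gamma\to\mathbf g$. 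This step is where the square-root singularity at empty cells must be excluded, and I regard it as the main (though mild) obstacle: one must argue that on the event that all cells are nonempty, the estimating equation is differentiable. That event has probability tending to one, so the complement can be discarded.

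Then I apply a mean-value (or implicit-function) expansion around $(\bm\psi_g,\mathbf g)$:
\[
\mathbf 0=\mathbf U(\hat{\bm\psi},\hat{\mathbf p})=\bar{\mathbf A}_\gamma(\hat{\bm\psi}-\bm\psi_g)+\bar{\mathbf J}_\gamma(\hat{\mathbf p}-\mathbf g),
\]
where $\bar{\mathbf A}_\gamma$ and $\bar{\mathbf J}_\gamma$ are the partial Jacobians evaluated row-wise at intermediate points. By consistency of both arguments and continuity of the derivatives, $\bar{\mathbf A}_\gamma\xrightarrow{P}\mathbf A$ and $\bar{\mathbf J}_\gamma\xrightarrow{P}\mathbf J_g$. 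Since $\mathbf A$ is nonsingular, $\bar{\mathbf A}_\gamma$ is invertible with probability tending to one, and
\[
\sqrt{n_\gamma}(\hat{\bm\psi}-\bm\psi_g)=-\mathbf A^{-1}\mathbf J_g\sqrt{n_\gamma}(\hat{\mathbf p}-\mathbf g)+o_p(1).
\]
Here the remainder is $o_p(1)$ because $\sqrt{n_\gamma}(\hat{\mathbf p}-\mathbf g)=O_p(1)$ by the assumed design CLT. Slutsky's lemma together with the assumed convergence to $N(\mathbf 0,\mathbf V_d)$ then yields the normal limit with covariance $\mathbf A^{-1}\mathbf J_g\mathbf V_d\mathbf J_g^\top\mathbf A^{-\top}$.

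Finally, under correct specification we have $\mathbf g=\bm\pi(\bm\psi_0)$, so $Q(\bm\psi_0)=0$ is the minimum, and by Assumption~\ref{assump:ident} $\bm\psi_g=\bm\psi_0$. It is also worth noting that $\mathbf A$ then reduces to one quarter of the Fisher information of the cell model, which serves as a sanity check.
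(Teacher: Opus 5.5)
Your proposal is correct and follows the same route as the paper. You linearize $\mathbf U(\hat{\bm\psi}_{\mathrm{HD}},\hat{\mathbf p}^w_\gamma)=\mathbf 0$ around $(\bm\psi_g,\mathbf g)$, use $\mathbf U(\bm\psi_g,\mathbf g)=\mathbf 0$ and the nonsingularity of $\mathbf A$, and conclude with the assumed design CLT and Slutsky's lemma. Beyond that, you supply details the paper leaves implicit (consistency to $\bm\psi_g$, the first-order condition holding with probability tending to one by interiority, positivity of $\hat p^w_{\gamma,ij}$ so that $\mathbf U$ is $C^1$ along the expansion segment, and convergence of the intermediate-point Jacobians), and these are what justify the paper's $o_p(n_\gamma^{-1/2})$ remainder.
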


\begin{proof}
The estimator satisfies 
$\mathbf U(\hat{\boldsymbol{\psi}}_{\mathrm{HD}}, \hat{\mathbf p}_{\gamma}^{w}) = \mathbf 0$.
A first-order Taylor expansion around $(\boldsymbol{\psi}_g, \mathbf g)$ gives
\[
\mathbf 0 = \mathbf U(\boldsymbol{\psi}_g, \mathbf g)
+ \mathbf A\,(\hat{\boldsymbol{\psi}}_{\mathrm{HD}} - \boldsymbol{\psi}_g)
+ \mathbf J_g\,(\hat{\mathbf p}_{\gamma}^{w} - \mathbf g)
+ o_p(n_\gamma^{-1/2}).
\]
Since $\boldsymbol{\psi}_g$ minimizes the population criterion,
$\mathbf U(\boldsymbol{\psi}_g, \mathbf g) = \mathbf 0$. Therefore
\[
\sqrt{n_\gamma}\,(\hat{\boldsymbol{\psi}}_{\mathrm{HD}} - \boldsymbol{\psi}_g)
= -\mathbf A^{-1}\mathbf J_g\,\sqrt{n_\gamma}\,(\hat{\mathbf p}_{\gamma}^{w} - \mathbf g) + o_p(1).
\]
The assumed CLT for the ratio-normalized HT cell-frequency vector and Slutsky's
theorem yield
\[
\sqrt{n_\gamma}\,(\hat{\boldsymbol{\psi}}_{\mathrm{HD}} - \boldsymbol{\psi}_g) 
\Rightarrow N\!\left(\mathbf 0, \mathbf A^{-1}\mathbf J_g\mathbf V_d\mathbf J_g^\top\mathbf A^{-\top}\right).
\qedhere
\]
\end{proof}

\begin{theorem}[Influence function of the MHDE functional]\label{thm:hd_if}
Under Assumptions~\ref{assump:superpopulation}--\ref{assump:smooth}, suppose
that the population minimizer $\boldsymbol{\psi}_g$ is interior, the Hessian
\[
\mathbf{A} = \nabla_{\boldsymbol{\psi}}^{2} 
D_{\mathrm{HD}}(\boldsymbol{\pi}(\boldsymbol{\psi}), \mathbf{g})\Big|_{\boldsymbol{\psi} = \boldsymbol{\psi}_g}
\]
is nonsingular, and $g_{ij} > 0$ and $\pi_{ij}(\boldsymbol{\psi}_g)>0$ for all cells.
Then the influence function of the MHDE functional 
\[
T(G) = \argmin_{\boldsymbol{\psi} \in \Psi} D_{\mathrm{HD}}(\boldsymbol{\pi}(\boldsymbol{\psi}), \mathbf{g})
\]

at the cell $(i, j)$ is
\[
\mathrm{IF}(i, j; T, G) = -\mathbf{A}^{-1} \boldsymbol{\phi}(i, j),
\quad 
\]
where
\[
\quad
\boldsymbol{\phi}(i, j) = \nabla_{\mathbf{p},\boldsymbol{\psi}} 
D_{\mathrm{HD}}(\boldsymbol{\pi}(\boldsymbol{\psi}), \mathbf{p})\Big|_{(\boldsymbol{\psi}, \mathbf{p}) = (\boldsymbol{\psi}_g, \mathbf{g})}
(\mathbf{e}_{ij} - \mathbf{g}).
\]
Equivalently, the influence function is obtained by differentiating the population estimating equation under the contaminated distribution
$\mathbf{g}_{\varepsilon} = (1 - \varepsilon)\mathbf{g} + \varepsilon \mathbf{e}_{ij}$.
For fixed $K$ and strictly positive cell probabilities, this influence function is finite at every cell.
However, its magnitude depends on inverse cell-probability factors and need not be uniformly bounded as some $g_{ij}$ or 
$\pi_{ij}(\boldsymbol{\psi}_g)$ approach zero.
\end{theorem}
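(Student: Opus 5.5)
The plan is to use the implicit function theorem on the population estimating equation along the contamination path $\mathbf g_\varepsilon = (1-\varepsilon)\mathbf g + \varepsilon \mathbf e_{ij}$. I would write $\mathbf U(\boldsymbol\psi,\mathbf p) = \nabla_{\boldsymbol\psi} D_{\mathrm{HD}}(\boldsymbol\pi(\boldsymbol\psi),\mathbf p)$, as in Theorem~\ref{thm:hd_asymptotic}. Explicitly,
\[
\mathbf U(\boldsymbol\psi,\mathbf p) = \frac12\sum_{k,l}\Bigl(\sqrt{\pi_{kl}(\boldsymbol\psi)}-\sqrt{p_{kl}}\Bigr)\frac{\nabla_{\boldsymbol\psi}\pi_{kl}(\boldsymbol\psi)}{\sqrt{\pi_{kl}(\boldsymbol\psi)}} = \frac12\sum_{k,l}\Bigl(1-\sqrt{p_{kl}/\pi_{kl}(\boldsymbol\psi)}\Bigr)\nabla_{\boldsymbol\psi}\pi_{kl}(\boldsymbol\psi).
\]
Because $\pi_{kl}(\boldsymbol\psi_g)>0$ and $g_{kl}>0$ for every cell, $\mathbf U$ is $C^1$ jointly in $(\boldsymbol\psi,\mathbf p)$ on a neighborhood of $(\boldsymbol\psi_g,\mathbf g)$. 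This uses the $C^2$ smoothness of $\boldsymbol\pi$ from Assumption~\ref{assump:smooth} and the smoothness of $\sqrt{\cdot}$ away from zero. Interiority of $\boldsymbol\psi_g$ gives $\mathbf U(\boldsymbol\psi_g,\mathbf g)=\mathbf 0$. The Hessian is $\mathbf A = \nabla_{\boldsymbol\psi}\mathbf U(\boldsymbol\psi_g,\mathbf g)$, and it is nonsingular by hypothesis.

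The implicit function theorem then gives a $C^1$ map $\mathbf p\mapsto \boldsymbol\psi(\mathbf p)$ on a neighborhood of $\mathbf g$ solving $\mathbf U(\boldsymbol\psi(\mathbf p),\mathbf p)=\mathbf 0$. The next step is to identify this root with $T$ along the path, i.e.\ to show $T(G_\varepsilon)=\boldsymbol\psi(\mathbf g_\varepsilon)$ for small $\varepsilon$. For this I would reuse the argument of Theorem~\ref{thm:hd_consist}. The map $\mathbf p\mapsto D_{\mathrm{HD}}(\boldsymbol\pi(\boldsymbol\psi),\mathbf p)$ is uniformly continuous in $\mathbf p$, uniformly over $\Psi$, so the separation condition in Assumption~\ref{assump:ident} forces every minimizer at $\mathbf g_\varepsilon$ to lie within any prescribed ball around $\boldsymbol\psi_g$ once $\varepsilon$ is small. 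Such a minimizer is interior, so it is a root of $\mathbf U(\cdot,\mathbf g_\varepsilon)$. The local uniqueness part of the implicit function theorem then shows it equals $\boldsymbol\psi(\mathbf g_\varepsilon)$. I expect this identification to be the main obstacle, since the influence function is a statement about the global argmin functional and not about a local root. The separation margin $\Delta(\epsilon)$ is exactly what bridges the two.

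Differentiating $\mathbf U(T(G_\varepsilon),\mathbf g_\varepsilon)=\mathbf 0$ at $\varepsilon=0$ by the chain rule gives
\[
\mathbf A\,\mathrm{IF}(i,j;T,G) + \nabla_{\mathbf p}\mathbf U(\boldsymbol\psi_g,\mathbf g)(\mathbf e_{ij}-\mathbf g)=\mathbf 0 .
\]
The second term is precisely $\boldsymbol\phi(i,j)$, because $\nabla_{\mathbf p}\mathbf U=\nabla_{\mathbf p,\boldsymbol\psi}D_{\mathrm{HD}}$. Solving yields $\mathrm{IF}(i,j;T,G) = -\mathbf A^{-1}\boldsymbol\phi(i,j)$, and the same computation is the stated equivalent characterization.

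For the qualitative claims I would compute the mixed derivative explicitly:
\[
\partial_{p_{kl}}\mathbf U = -\tfrac14\,\nabla_{\boldsymbol\psi}\pi_{kl}(\boldsymbol\psi_g)\big/\sqrt{g_{kl}\,\pi_{kl}(\boldsymbol\psi_g)} .
\]
This gives
\[
\boldsymbol\phi(i,j) = -\tfrac14\Bigl[\frac{\nabla_{\boldsymbol\psi}\pi_{ij}}{\sqrt{g_{ij}\pi_{ij}}} - \sum_{k,l}\sqrt{g_{kl}/\pi_{kl}}\,\nabla_{\boldsymbol\psi}\pi_{kl}\Bigr],
\]
with all quantities evaluated at $\boldsymbol\psi_g$. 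With $K$ fixed and every $g_{kl},\pi_{kl}>0$, this is a finite sum of finite terms, and $\mathbf A^{-1}$ exists, so the influence function is finite at every cell. The factor $1/\sqrt{g_{ij}\pi_{ij}}$, which becomes $1/\pi_{ij}$ under correct specification, shows the dependence on inverse cell probabilities. To show this dependence need not be uniformly bounded, I would take a cell with $g_{ij}\to0$ while $\nabla_{\boldsymbol\psi}\pi_{ij}$ decays more slowly than $\sqrt{g_{ij}\pi_{ij}}$. For instance, a tail corner cell has Gaussian $\pi_{ij}$, and its derivative in $\theta_1$ is of order $\pi_{ij}$ times a growing Mills-ratio factor. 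In that case the norm of the influence function blows up, so only a pointwise, not uniform, bound holds.
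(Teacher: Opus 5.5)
Your proposal is correct and takes the same route as the paper: differentiate $\mathbf U(T(G_\varepsilon),\mathbf g_\varepsilon)=\mathbf 0$ at $\varepsilon=0$ to obtain $\mathbf A\,\mathrm{IF}(i,j;T,G)+\boldsymbol\phi(i,j)=\mathbf 0$, then get finiteness from the finite support and the strictly positive cell probabilities. It is more careful than the paper in three places. Your implicit-function-theorem step, combined with the separation margin $\Delta(\epsilon)$ to identify the global argmin with the local root, justifies the differentiability of $\varepsilon\mapsto T(G_\varepsilon)$, which the paper takes for granted. Your explicit formula for $\boldsymbol\phi(i,j)$ makes the $1/\sqrt{g_{ij}\pi_{ij}}$ dependence concrete. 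For the blow-up claim, note that blow-up of $\boldsymbol\phi$ gives blow-up of the IF only once you check that $\|\mathbf A\|$ stays bounded along your sequence, since $\|\mathbf A^{-1}\boldsymbol\phi\|\geq\|\boldsymbol\phi\|/\|\mathbf A\|$.
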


\begin{proof}
The result follows by differentiating the population estimating equation
\[
\mathbf U(\boldsymbol{\psi},\mathbf p)
=
\nabla_{\boldsymbol{\psi}}
D_{\mathrm{HD}}(\boldsymbol{\pi}(\boldsymbol{\psi}),\mathbf p)
\]
along the contamination path
\[
\mathbf g_{\varepsilon}
=
(1-\varepsilon)\mathbf g+\varepsilon \mathbf e_{ij}.
\]
At \(\varepsilon=0\), the implicit differentiation of
\(\mathbf U(T(G_{\varepsilon}),\mathbf g_{\varepsilon})=\mathbf 0\) gives
\[
\operatorname{IF}(i,j;T,G)
=
-\mathbf A^{-1}\boldsymbol{\phi}(i,j).
\]
This is the standard Gateaux-derivative calculation for an implicitly defined statistical functional; see \citet{hampel_robust_1986}.
The detailed derivation and the discussion of finite-support versus uniform boundedness are given in Appendix~\ref{supp:proofs}.
\end{proof}

\begin{remark}[Efficiency and Hellinger geometry]\label{rem:hd_vs_ml}
Under correct parametric specification, the MHDE has the same first-order
asymptotic variance as the MLE.
Indeed, the Hellinger estimating equation can be written as
\[
\bm 0 = \sum_{i,j} \sqrt{p_{ij}\,\pi_{ij}(\boldsymbol{\psi})}\,
\nabla_{\boldsymbol{\psi}} \log \pi_{ij}(\boldsymbol{\psi}),
\]
where $p_{ij}$ denotes the population cell probability, or its HT-weighted empirical version in the sample criterion.
Thus the HD estimating equation uses the usual likelihood score contribution $\nabla_{\boldsymbol{\psi}} \log \pi_{ij}(\boldsymbol{\psi})$, but weights it by
the square-root of the empirical and model-implied cell probabilities.

At the correctly specified model, $p_{ij} = \pi_{ij}(\boldsymbol{\psi}_0)$, and the weight reduces to $\pi_{ij}(\boldsymbol{\psi}_0)$.
Consequently, the first-order linearization of the MHDE agrees with that of ML, giving the same first-order influence function and asymptotic variance.

Away from the model, however, the cross-term $\sqrt{p_{ij}\,\pi_{ij}(\boldsymbol{\psi})}$ captures the trade-off between the nonparametric cell distribution and the parametric model.
A cell contributes substantially only when both the empirical distribution and the fitted model assign it non-negligible mass.
This square-root geometry explains why HD can be less sensitive than ML to moderate rare-cell perturbations.
However, it does not yield uniform boundedness: when contamination is concentrated in extremely
sparse cells, the score term
\[
\nabla_{\boldsymbol{\psi}} \log \pi_{ij}(\boldsymbol{\psi}) 
= \nabla_{\boldsymbol{\psi}} \pi_{ij}(\boldsymbol{\psi}) / \pi_{ij}(\boldsymbol{\psi})
\]

can still be large.

Finally, the estimator functional is defined at the superpopulation level. 
The sampling design affects the estimator's variance through the HT-weighted empirical cell-frequency process, not the population functional itself.
\end{remark}

\section{Simulation study}\label{sec:simulations}
We conduct comprehensive Monte Carlo simulations to evaluate the performance of the proposed robust estimators under complex survey designs with contamination.
We compare the proposed divergence-based estimators (MHDE, Ridge MHDE, Lasso MHDE, and MNEDE), against the non-robust baseline maximum likelihood (ML) estimator which maximizes $\sum_{i,j} \hat{p}_{ij}^w \log \pi_{ij}(\boldsymbol{\psi})$.

We further compare against the E-estimator (EE) \citep{WelzMairAlfons2026}, which downweights cells with extreme Pearson residuals via a discrepancy function with tuning constant $c$.
We follow the advice in \citet{WelzMairAlfons2026} and set $c = 0.6$ and compute the estimator with the \texttt{robcat} R package \citep{robcat2026}.

For the penalized variants of MHDE, we select $\lambda$ using 5-fold cross-validation on median variation loss $\text{MV}(\hat{\boldsymbol{\pi}}^w, \boldsymbol{\pi}(\boldsymbol{\psi})) = \frac{1}{2} \text{Med}_{i,j} |\hat{p}_{ij}^w - \pi_{ij}(\boldsymbol{\psi})|$, evaluating $\lambda \in [0, 0.5]$ over 15 equally-spaced values.

\begin{remark}[Design-based versus i.i.d.-based inference]\label{rem:design_vs_iid}
The EE is designed for i.i.d.\ samples and not for survey data.
We circumvent this mismatch by supplying the HT-weighted contingency table (rounded to the nearest integers) to maintain design-awareness.
Although the resulting point estimate is consistent (when the rounding is ignored), the i.i.d.\ variance formula for EE systematically underestimates the design-based variance whenever the sampling weights are variable.
The asymptotic covariance for the MHDE and MNEDE in Theorem~\ref{thm:hd_asymptotic}, on the other hand, takes the sampling design into account as the inclusion probabilities $\Pr(i \in S_{\gamma})$ enter the weighted score function.
The magnitude of the discrepancy is captured by the design effect $\mathrm{DEFF} = n / n_\mathrm{eff}$, with $n_\mathrm{eff} = (\sum_i w_{\gamma i})^2 / \sum_i w_{\gamma i}^2$.
A detailed comparison of the two variance forms is provided in Appendix~\ref{supp:weighted_ee}.
\end{remark}

\paragraph{Survey design and superpopulation framework.}
Each Monte Carlo replication proceeds in three stages.
First, a finite population of $N = 5{,}000$ units is generated from the superpopulation: for each unit $i$, we draw \((X_i,Z_{1i}^*,Z_{2i}^*)\) from a trivariate normal distribution,
where \(X_i\in\mathbb R\) is an auxiliary variable with $\text{cor}(X_i, Z_{1i}^*) = \rho_{XZ} = 0.25$, and the latent variables $(Z_{1i}^*, Z_{2i}^*)$ have the specified marginal and correlation parameters. The positive PPS size measure is taken to be $M_i=\exp(X_i)$.

The latent variables are then discretized to ordinal responses $(Y_{1i}, Y_{2i})$ via thresholds. Second, a sample is drawn by Poisson-PPS sampling with inclusion probabilities $P(i \in S) = n M_i / \sum_j M_j$ and target sample size $n = 500$.
We cap each inclusion probability at one, so that the $\pi_i$ remain valid probabilities and units with $\pi_i = 1$ enter the sample with certainty.
Third, contamination is applied at the unit level (see below), and weighted cell frequencies $\hat{p}_{ij}^w$ are computed.

\paragraph{Parameter configurations.}
We examine two scenarios: standard and non-standard.
In the standard scenario, the true parameters are $\theta_1 = \theta_2 = 0$, $\sigma_1 = \sigma_2 = 1$, and $\rho = 0.5$.
In the non-standard scenario, the true parameters are $\theta_1 = \theta_2 = 0.5$, $\sigma_1 = \sigma_2 = 0.8$, and $\rho = 0.5$, but the thresholding mechanism is the same as in the standard scenario.
This location shift and scale compression causes category boundaries to misalign with the true latent distribution, a situation common in survey data with ceiling/floor effects.

The cell probability structure under the true model is essential for understanding estimator behavior.
Figure~\ref{fig:cell_sparsity} shows the cell probabilities $\pi_{ij}$ under $\rho = 0.5$ and standard marginals with the chosen boundaries.
Concordant corners (cells $(1,1)$ and $(5,5)$) have $\pi \approx 0.012$, while discordant corners (cells $(1,5)$ and $(5,1)$) have $\pi \approx 6 \times 10^{-5}$---a 200-fold sparsity difference induced by the positive correlation.
In the scenarios with non-standard marginals this imbalance is further amplified, particularly in the bottom left corner cell $(1, 1)$.

Common parameters across simulations are $K = 5$ categories with cell boundaries $b_k = \Phi^{-1}(q_k)$ where $q_k \in \{0, 0.05, 0.20, 0.80, 0.95, 1\}$ yielding expected proportions under standard marginals [5\%, 15\%, 60\%, 15\%, 5\%], and 100 Monte Carlo replications per contamination level.

\paragraph{Contamination mechanism.}
Contamination is applied at the unit level after sampling. At each contamination level $\varepsilon \in \{0\%, 5\%, 10\%, 15\%, 20\%\}$, a proportion $\varepsilon$ of sampled units are selected uniformly at random and have their $(Y_1, Y_2)$ replaced with values in the target corner region.
Of the contaminated units, 80\% are placed in the main corner cell and 20\% are diffused uniformly to the three immediately adjacent cells. We examine three corner types: upper $(K, K)$, lower $(1, 1)$, and mixed $(1, K)$. Upper and lower corners are concordant cells under positive $\rho$, so contamination there reinforces the correlation direction (concordant contamination); the mixed setting targets a discordant cell, pushing against the true correlation (discordant contamination). The mixed-corner case corresponds to the contamination structure studied in \citet{WelzMairAlfons2026}, which models careless responses to reverse-keyed items; the lower- and upper-corner cases extend this framework to acquiescence and extreme response style scenarios discussed in Section~\ref{sec:introduction}. Because contamination operates at the unit level, each contaminated observation carries its original survey weight into the weighted cell frequencies.
Therefore, contaminated units with large weights (low inclusion probabilities) exert disproportionate influence on $\hat{p}_{ij}^w$.
The highlighted cells in Figure~\ref{fig:cell_sparsity} indicates the contamination targets in the three contamination mechanisms.

\paragraph{Performance metrics.}
For each parameter $\eta \in \{\rho, \theta_1, \sigma_1\}$ and estimator, we calculate bias $\text{Bias}(\hat{\eta}) = R^{-1} \sum_{r=1}^{R} (\hat{\eta}_r - \eta_0)$ and mean squared error $\text{MSE}(\hat{\eta}) = R^{-1} \sum_{r=1}^{R} (\hat{\eta}_r - \eta_0)^2$, where $R = 100$.

\begin{figure}[htbp]
    \centering
    \includegraphics[width=0.65\textwidth]{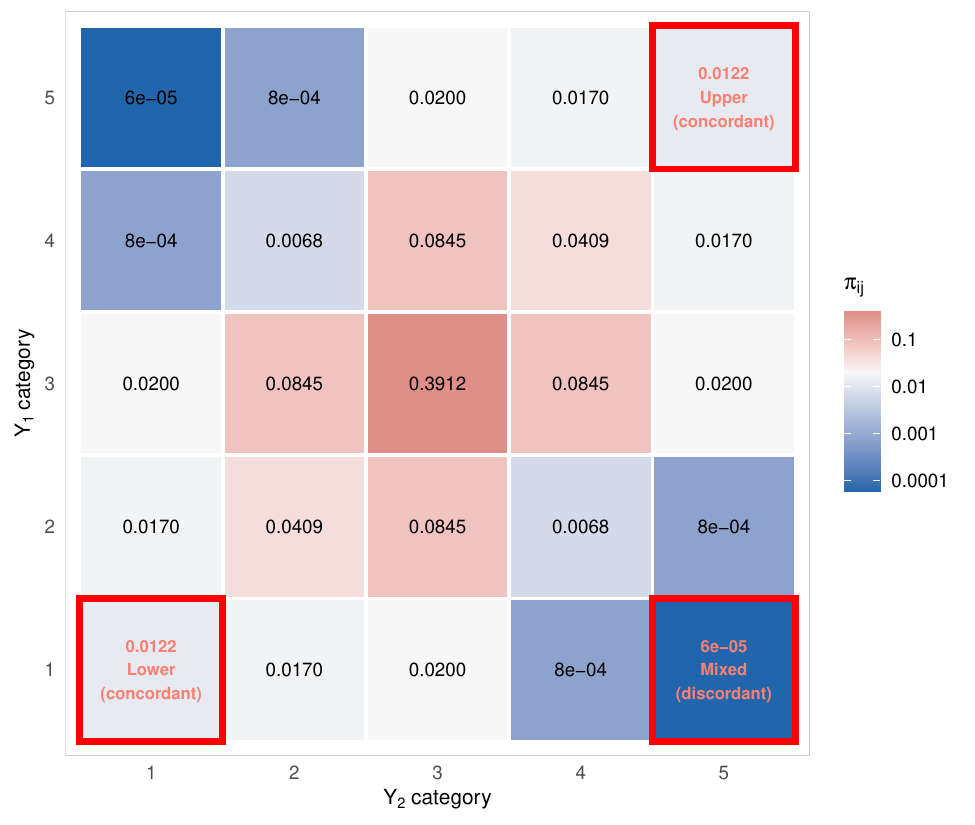}
    \caption{Cell probabilities $\pi_{ij}$ under $\rho = 0.5$. Cells with red borders indicate contamination targets.}
    \label{fig:cell_sparsity}
\end{figure}

\subsection{Results}

\begin{figure}[htbp]
    \centering
    \includegraphics[width=\textwidth]{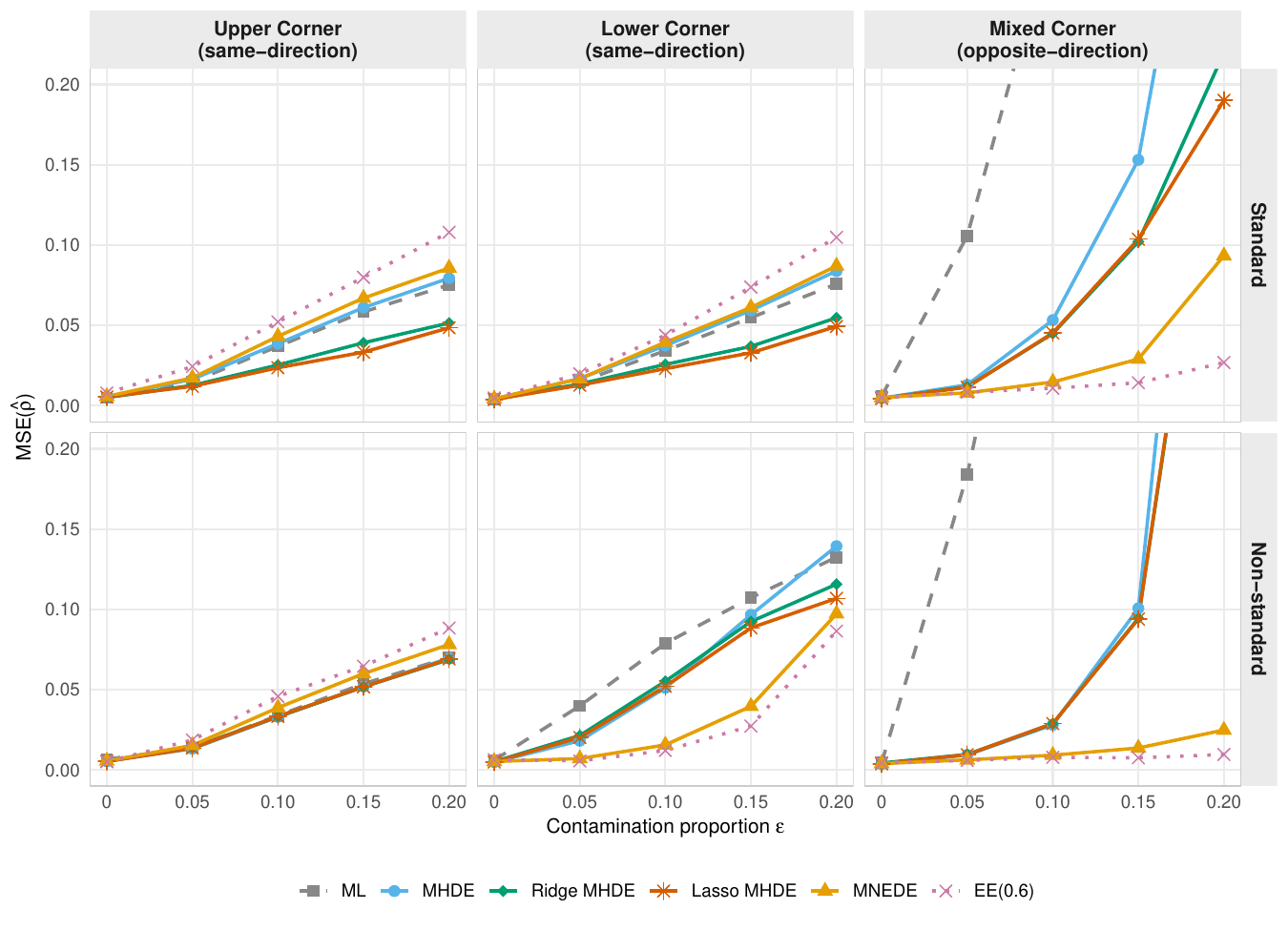}
    \caption{MSE of $\hat{\rho}$ across contamination scenarios under PPS sampling ($N = 5{,}000$, $n \approx 500$, 100 replications). The rows show standard vs.\ non-standard marginals, whereas the columns separate upper, lower (concordant), and mixed (discordant) corner contamination.}
    \label{fig:mse_rho_panel}
\end{figure}

\begin{figure}[htbp]
    \centering
    \includegraphics[width=\textwidth]{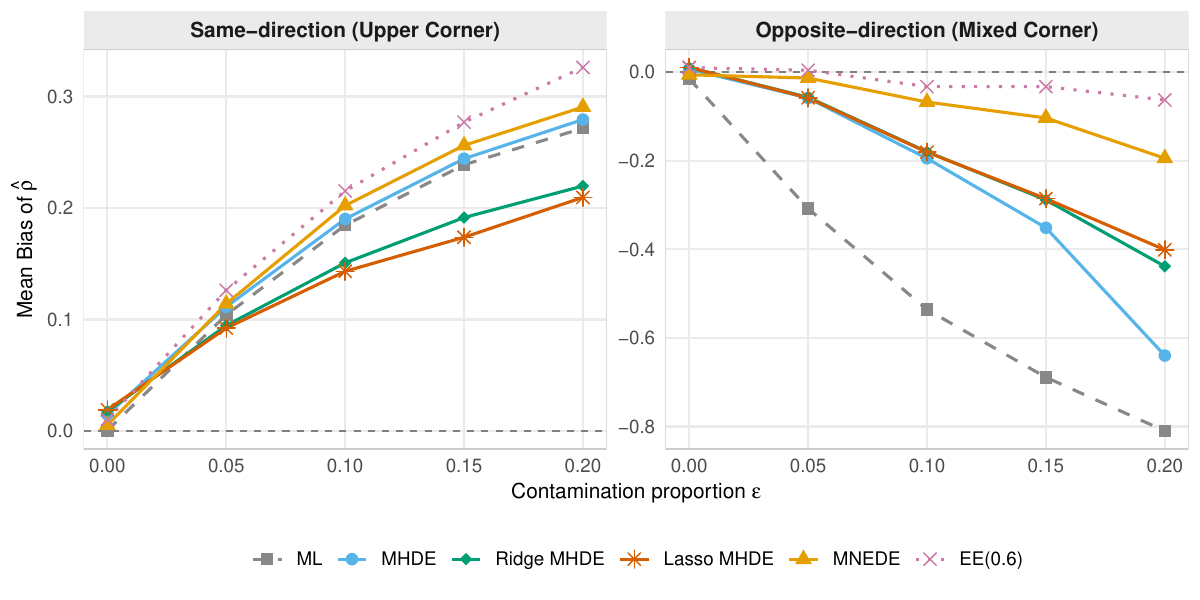}
    \caption{Mean bias of $\hat{\rho}$ under standard marginals. Left: concordant (upper corner) contamination, where all estimators are biased upward but penalized MHDE variants (Lasso MHDE, Ridge MHDE) achieve the smallest bias. Right: discordant (mixed corner) contamination, where MNEDE maintains near-zero bias while HD-based methods and ML degrade severely.}
    \label{fig:bias_rho_key}
\end{figure}

As predicted by our theory, with perfectly clean data ($\varepsilon = 0$) all estimators achieve practically identical results with similar MSE (Figure~\ref{fig:mse_rho_panel}), and the bias of the robust estimators is only marginally higher than the bias of the baseline MLE (Figure~\ref{fig:bias_rho_key}).
The penalization does not introduce excessive bias since the parameter of interest $\rho$ is excluded from regularization.
Once contamination is considered, however, the performance paths diverge based entirely on the geometry of the outliers.

Under concordant contamination, mimicking behaviors like acquiescence bias or extreme response styles, the penalized MHDE variants are substantially more stable than other estimators, provided the marginal are standard normal.
As shown in the left columns of Figures~\ref{fig:mse_rho_panel} and~\ref{fig:bias_rho_key}, they maintain low MSE and bias for $\rho$.
They also tightly control the MSE for nuisance parameters ($\theta_1$ and $\sigma_1$) as shown in the additional simulation results in Appendix~\ref{supp:nuisance}.
The unpenalized MHDE and MNEDE behave similarly to the non-robust MLE in this scenario because the Pearson residual remains relatively small under concordant contamination.
The EE's downweighting mechanism, on the other hand, is too aggressive for the concordant contamination scenario, rendering the EE substantially worse than the other estimators.
A larger value for the tuning constant $c$ alleviates some of those shortcomings of the EE in this contamination scenario, but this substantially deteriorates the EE in the other settings.

Under non-standard marginals, the lower corner contamination setup changes the dynamic substantially (bottom center panel in Figure~\ref{fig:mse_rho_panel}).
Here, the misalignment between the fixed thresholds and marginal distributions shrinks the model-implied cell probability ($\pi_{11}$).
This smaller probability drastically inflates the Hellinger distance's $1/\sqrt{\pi_{11}}$ scaling factor, acting as a sensitivity amplifier that causes HD-based estimators to degrade heavily under this contamination.
MNEDE and EE are less affected by this inverse-square-root sensitivity because their estimating equations use bounded residual adjustments rather than the HD's square-root scaling, allowing them to remain robust and substantially outperform the rest.

This contrast becomes even more explicit under discordant contamination shown in the right column of Figure~\ref{fig:mse_rho_panel}.
In this regime, often encountered due to careless responding to reverse-keyed items, the contaminated cells are incredibly sparse under a positive $\rho$.
Because Hellinger distance only offers polynomial dampening, HD methods suffer severe negative bias, almost as bad as the MLE, as shown in the right columns of Figures~\ref{fig:mse_rho_panel} and~\ref{fig:bias_rho_key}.
MNEDE’s bounded negative-exponential residual adjustment and EE's sharp-cutoff downweighting prove highly resilient to this contamination, allowing both to maintain small bias and MSE for $\varepsilon \leq 0.15$.
As the contamination becomes more prevalent, however, the bias of MNEDE starts to increase, while the EE remains relatively stable.

\begin{remark}[Differences in parametrization]\label{rem:ee-parametrization}
The EE shown here assumes standardized latent variables ($\theta_l = 0$, $\sigma_l = 1$) and estimates $\rho$ alongside the threshold parameters; we therefore include EE only in the comparison of $\hat{\rho}$.
In Appendix~\ref{supp:welz_empirical} we apply the EE with the alternative parametrization to match the proposed divergence-based estimators.
These results show that the EE performs poorly with the alternative parametrization, primarily because a fixed tuning constant $c$ does not work equally well for all $\boldsymbol{\psi} \in \boldsymbol{\Psi}$.
The tuning-free MHDE and MNEDE, on the other hand, perform equally well in both parametrizations, but penalization is meaningful only in the location/scale parametrization used here.
\end{remark}

\section{Application: validating self-reported against measured BMI}
\label{sec:application}

We illustrate the estimators on a measurement-validation problem in which two ordinal variables are designed to capture the same latent construct, hence the latent correlation should be close to one. 
Departures from unity are therefore attributable to reporting error rather than to a genuinely weak association, which makes the setting a clean test of robustness: a well-behaved estimator should recover a correlation near the continuous-scale benchmark, while a non-robust estimator should be pulled toward the off-diagonal mass induced by misreporting.

The data come from the 2021--2023 National Health and Nutrition Examination Survey (NHANES).
The measured BMI is the examiner-recorded body mass index; the self-reported BMI is derived from interview-reported height and weight.
Both are discretized at the World Health Organization cutpoints $18.5, 25, 30, 35\ \mathrm{kg/m^2}$, yielding $K=5$ ordered categories from underweight to obese~II$+$.
We use the mobile-examination-center survey weights included in the dataset.

Self-reported anthropometry systematically underestimates weight, and the resulting BMI misclassification is well documented \citep{ConnorGorber2007}. 
Under-reporting is linked to social-desirability pressures and body-image concerns, and is more pronounced among women and younger adults \citep{Larson2000, King2018}. 
We therefore restrict our attention to the $18$--$22$ age range ($n = 429$). 
Because the subset is a subpopulation of the full survey, point estimation uses the subset-restricted weighted frequencies.

The reporting error has the directional, locally concentrated form that motivates the joint parameterization. 
On the continuous scale the mean self-minus-measured shift is $-0.51\ \mathrm{kg/m^2}$, indicating systematic under-reporting; on the categorical scale over $75\%$ of respondents fall on the diagonal, while the off-diagonal mass is asymmetric, with nearly $15\%$ reporting a \emph{lower} category than measured against only $7\%$ reporting a higher one.
The weighted cell proportions (Figure~\ref{fig:bmi}a) place this discordance in cells adjacent to the WHO cutpoints rather than spreading it uniformly.

\subsection{Results}
\label{subsec:bmi-results}

On the continuous scale the design-weighted Pearson correlation between self-reported and measured BMI is $0.999$, confirming that the two instruments track the same latent quantity almost perfectly. 
Compared to this benchmark, the maximum-likelihood polychoric estimate is $\hat\rho_{\mathrm{ML}} = 0.932$, and the weighted Spearman correlation is $0.935$: both are pulled downward by the sparse off-diagonal misclassifications.

The robust estimators, on the other hand, recover the strong latent association (Table~\ref{tab:bmi}, Figure~\ref{fig:bmi}c).
MHDE-based and MNEDE agree on a polychoric correlation of approximately $0.973$.
The E-estimator under the classical parametrization yields a slightly lower correlation of $0.96$.
Importantly, a survey replicate-weighted bootstrap shows high variability for the ML and the EE, whereas the MHDE variants and MNEDE appear relatively stable.
Applying the E-estimator in our parametrization does not work well in this application with a point estimate close to the ML and extremely wide confidence intervals.

The Pearson-residual diagnostic for the Lasso MHDE fit (Figure~\ref{fig:bmi}b) makes the mechanism explicit. 
The fitted model leaves one cell --- self-reported overweight against measured underweight, a logically near-impossible combination carrying only $0.007$ weighted mass --- with an extreme standardized residual, while all substantively plausible cells are fitted well. 
The robust criteria down-weight this contradictory cell rather than letting it determine $\hat\rho$, which is precisely why the MHDE and MNEDE are close to the continuous-scale benchmark while ML deviates substantially. 

\begin{table}[t]
\centering
\caption{Polychoric correlation between self-reported and measured BMI
(NHANES 2021--2023, ages 18--22, $n=429$). Confidence intervals are design-based
$95\%$ intervals taken as the 2.5th and 97.5th percentiles of the
replicate-bootstrap distribution.}
\label{tab:bmi}
\begin{tabular}{lccc}
\toprule
Estimator & $\hat\rho$ & 95\% CI \\
\midrule
ML            & 0.932 & [ 0.891   0.973 ] \\
MHDE          & 0.973 & [ 0.963   0.990 ]  \\
Ridge MHDE    & 0.974 & [ 0.963   0.990 ] \\
Lasso MHDE    & 0.972 & [ 0.962   0.988 ] \\
MNEDE         & 0.975 & [ 0.960   0.987 ] \\
EE(fixed)     & 0.940 & [ -0.207   0.994 ] \\
EE$(0.6)$     & 0.960 & [ 0.917   0.967 ] \\
\midrule
Spearman      & 0.935 & [ 0.899   0.960 ]\\
Pearson       & 0.999 \\
\bottomrule
\end{tabular}
\end{table}

\begin{figure}[t]
  \centering
  \includegraphics[width=\textwidth]{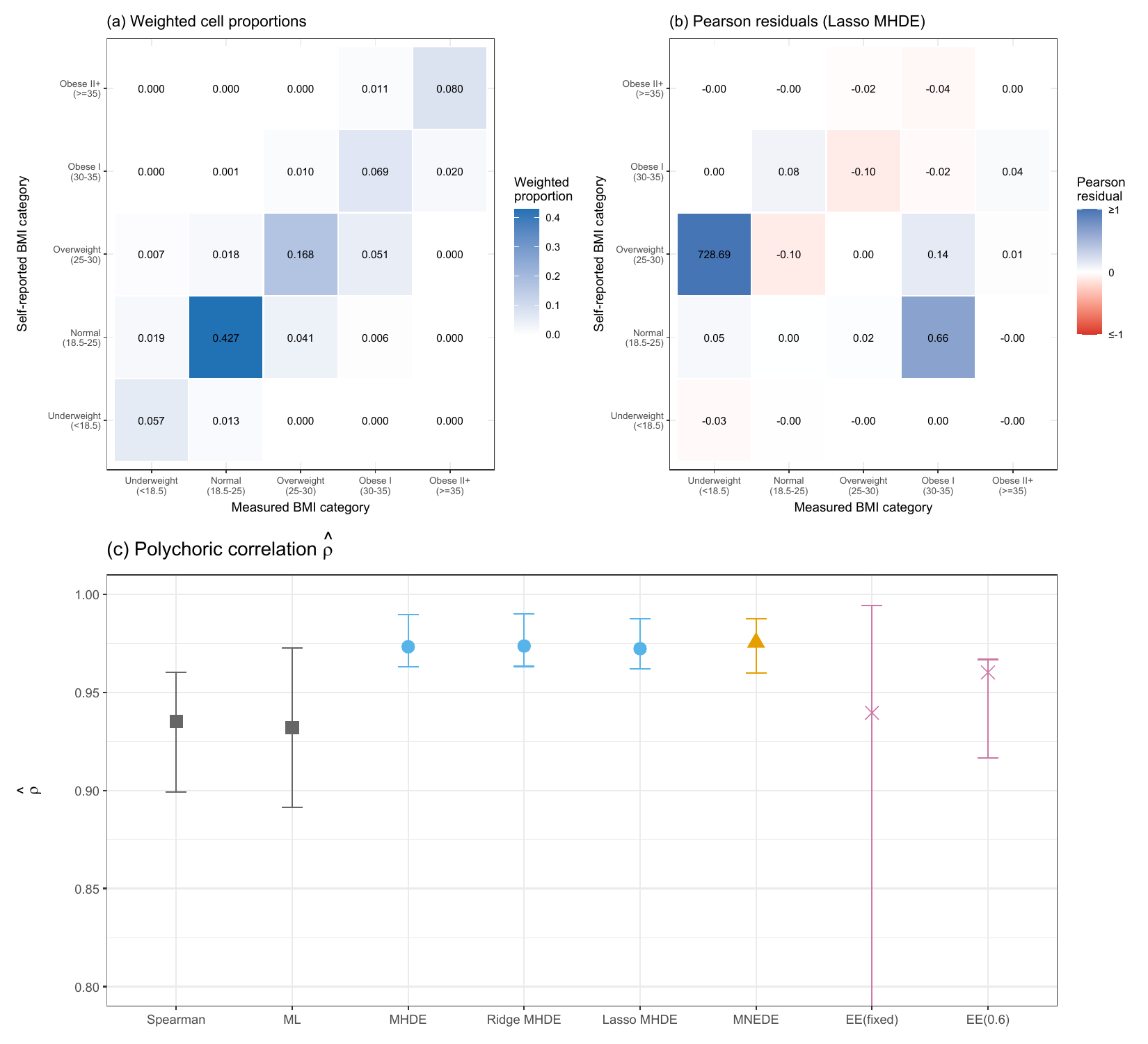}
  \caption{Self-reported vs. measured BMI, NHANES 2021--2023, ages 18--22.
    (a) design-weighted cell proportions; (b) Pearson
    residuals for the Lasso MHDE fit. (c) polychoric correlation by
    estimator with design-based 95\% confidence intervals; The large residual in the
    (self-overweight, measured-underweight) cell reflects a sparse,
    near-zero-probability cell that the robust criterion down-weights.}
  \label{fig:bmi}
\end{figure}

\section{Conclusion}

This paper develops robust estimators for polychoric correlation under complex survey designs based on two minimum divergence criteria: Hellinger distance with penalized Ridge and Lasso variants and the negative exponential disparity.
These estimators incorporate Horvitz--Thompson weighted cell frequencies to accommodate unequal inclusion probabilities arising from survey designs such as PPS sampling.
For MHDE, we established consistency and asymptotic normality with a design-dependent sandwich covariance. The influence function is finite on cells with positive probability but not uniformly bounded, reflecting Hellinger's sparse-cell sensitivity that drives the contrasting empirical performance across concordant and discordant contamination scenarios.

The empirical results under PPS sampling clearly show that the three classes of robust estimators considered here (MHDE, MNEDE, EE) offer complementary advantages: penalized MHDE methods provide the best estimation for the polychoric correlation under concordant contamination when marginals are correctly specified, while the E-estimator and MNEDE achieve lower MSE under discordant contamination or under compound misspecification--contamination effects.
These differences are primarily due to the different residual-adjustment mechanisms. MHDE has inverse-square-root sensitivity through $1/\sqrt{\pi_{ij}}$, whereas MNEDE uses a bounded negative-exponential adjustment of the Pearson residual and EE uses a sharp-cutoff rule.

Because these estimators respond differently to different contamination types, comparing them is informative in applied work where the contamination structure is unknown. Computing penalized MHDE alongside MNEDE (and EE) and comparing the resulting estimates provides a good sensitivity check: substantial disagreement---particularly between penalized MHDE and MNEDE or EE---is consistent with discordant contamination or a compound misalignment--contamination effect, and should prompt the practitioner to carefully inspect the weighted residual table for the cells driving the divergence and to report estimates from more than one estimator instead of committing to a single number.
Across the configurations examined, MNEDE remained competitive and is a sensible default when no contamination pattern is identified in advance, whereas penalized MHDE is preferable when the marginal parameters are themselves of inferential interest and standard Normal marginals of the latent variables are reasonable.
All methods are computationally tractable, providing practical tools for ordinal data analysis in complex survey designs.

Another advantage of the divergence-based estimators presented here is that they are tuning-free.
Even for the penalized variants, the regularization strength can be efficiently selected by cross-validation or other data-driven strategies.
For the EE, on the other hand, the tuning constant directly governs its robustness properties, making a data-driven choice for $c$ fragile and susceptible to contamination.
Moreover, because of the general nature of the proposed divergence-based estimators, they can be easily extended to more general polychoric models with arbitrary elliptical latent distributions \citep{Cheng2025} and higher dimensions.

\bibliographystyle{plainnat}
\bibliography{citation}

\appendix
\numberwithin{figure}{section}
\numberwithin{table}{section}
\numberwithin{equation}{section}

\section*{Appendix}
\addcontentsline{toc}{section}{Appendix}
\noindent The following appendices provide supporting material: parameterization
choices (Appendix~\ref{supp:param}), computational details
(Appendix~\ref{supp:computation}), the proof of Theorem~\ref{thm:hd_if}
(Appendix~\ref{supp:proofs}), marginal-parameter estimation
(Appendix~\ref{supp:nuisance}), survey-weighted inference
(Appendix~\ref{supp:weighted_ee}), and the compound non-standard-marginal
contamination effect (Appendix~\ref{supp:compound}).

\section{On the Choice of Parameterization}
\label{supp:param}

The bivariate polychoric model is not identifiable unless a normalization is imposed. In particular, without fixing either the latent marginal location-scale parameters or the threshold scale, there can exist two distinct parameter values that induce the same observed cell-probability vector:
\[
\boldsymbol{\psi}_1 \neq \boldsymbol{\psi}_2,
\qquad
\boldsymbol{\pi}(\boldsymbol{\psi}_1) = \boldsymbol{\pi}(\boldsymbol{\psi}_2).
\]
Identification therefore requires restricting the parameterization. Two natural normalizations are common: fixing the latent marginal distributions and estimating thresholds, or fixing the thresholds and estimating the latent marginal location-scale parameters. This section outlines the two formulations, identifies the empirical settings each is best suited to, and clarifies the relationship between them.

\subsection{Two parameterizations of the polychoric model}
\label{supp:two_param}

The classical formulation, adopted by \citet{Olsson1979} and the subsequent psychometric literature including \citet{WelzMairAlfons2026}, fixes the latent marginal distribution at standard normal,
\begin{equation*}
  (\xi, \eta)^{\top} \sim \mathcal{N}_2\!\left(
    \mathbf{0},\,
    \begin{pmatrix} 1 & \rho \\ \rho & 1 \end{pmatrix}
  \right),
\end{equation*}
and estimates the correlation coefficient $\rho$ jointly with the two sets of thresholds $\{a_k\}_{k=1}^{K-1}$ and $\{b_k\}_{k=1}^{K-1}$. The parameter vector under this formulation is $\boldsymbol{\theta}_{\mathrm{cl}} = (\rho, a_1, \ldots, a_{K-1}, b_1, \ldots, b_{K-1})^{\top}$, with dimension $2(K-1) + 1$.

The formulation adopted in this paper instead fixes the thresholds at values determined by the study design, and treats the latent marginal location and scale as free parameters,
\begin{equation*}
  (\xi, \eta)^{\top} \sim \mathcal{N}_2\!\left(
    \begin{pmatrix} \theta_1 \\ \theta_2 \end{pmatrix},\,
    \begin{pmatrix}
      \sigma_1^{2} & \rho \sigma_1 \sigma_2 \\
      \rho \sigma_1 \sigma_2 & \sigma_2^{2}
    \end{pmatrix}
  \right).
\end{equation*}
The parameter vector is $\boldsymbol{\theta}_{\mathrm{new}} = (\theta_1, \theta_2, \sigma_1, \sigma_2, \rho)^{\top}$, with fixed dimension irrespective of $K$. Both formulations achieve identification of the model and target the same population correlation coefficient between $\xi$ and $\eta$, since correlation is invariant under location and scale transformations.

\subsection{Application scenarios}
\label{supp:scenarios}

The two formulations correspond to distinct empirical contexts. The classical parameterization is appropriate when the ordinal response categories are subjective gradations of an unobservable continuous attribute, as in Likert-type psychometric items or attitude scales. In such settings, the discretization thresholds reflect how respondents map their internal experience onto fixed response options and have no external referent on a continuous scale; estimating them from the data while fixing the latent distribution at standard normal is the natural identification choice.

The parameterization adopted here is appropriate when the response categories are intervals on an externally measurable continuous variable, with cut-points determined by the survey instrument or by administrative criteria. The Current Population Survey \citep{CPS} collects household income in predetermined intervals defined in dollar amounts; the European Social Survey \citep{ESS} records education using the harmonized ISCED categories with externally specified boundaries. In both cases, the discretization thresholds are known design quantities, and treating them as estimable would discard information available \emph{a priori}. The marginal mean and variance, by contrast, describe genuine features of the target population that may depart substantially from the standard normal---right-skewed income distributions, cohort or country differences in mean education levels---and estimating them is both informative and necessary for the survey-based analyses considered in this paper.

\subsection{Identification and target parameter}
\label{supp:identification}

Although the two formulations differ in which parameters are held fixed, they target the same population correlation coefficient $\rho = \mathrm{Cor}(\xi, \eta)$. They differ in how residual variation from the data is absorbed during estimation: the classical formulation absorbs deviations of the empirical marginal distribution into the threshold estimates, while the formulation used here absorbs them into the marginal location and scale estimates. The empirical performance of $\hat{\theta}_1, \hat{\theta}_2, \hat{\sigma}_1, \hat{\sigma}_2$ in our framework is examined in Section~\ref{supp:nuisance}.

\subsection{Empirical comparison under the classical parameterization}
\label{supp:welz_empirical}

The discussion in Sections~\ref{supp:two_param}--\ref{supp:identification} characterizes the two parameterizations conceptually.
To complement that discussion empirically, we re-estimate the divergence-based estimators of the main text under the classical parameterization, fixing $\mu = 0$ and $\sigma = 1$ and treating the thresholds together with $\rho$ as free parameters, using the same simulation design described in Section~\ref{sec:simulations}.
Figure~\ref{fig:welz_param_comparison} shows the resulting MSE of $\hat{\rho}$ on a logarithmic scale across the six contamination scenarios. The figure also includes EE(fixed), a variant of the E-estimator that retains the bounded score function of \citet{WelzMairAlfons2026} but operates under the alternative parameterization, with thresholds fixed at the empirical marginal quantiles and the latent location and scale parameters treated as free.

In the concordant corner panels (upper and lower), the five classically-parameterized estimators behave similarly, with no
substantial differentiation among MHDE, MNEDE, and EE(0.6).
In the discordant corner, by contrast, MNEDE and EE(0.6) substantially outperform MHDE.
The exponential downweighting in MNEDE and the bounded score in EE(0.6) effectively suppress discordant-cell contamination under either parameterization, but the $1/\sqrt{\pi}$ scaling underlying the MHDE downweighting interacts poorly with threshold-only flexibility in the discordant regime.

EE(fixed) under the joint parameterization exhibits a fundamental instability: its MSE is above $0.3$ across all $\varepsilon$ and all scenarios, including the uncontaminated case.
This confirms that the joint parameterization is not suitable for the E-estimator's original formulation, and motivates the adoption of EE(0.6) under the classical parameterization in the main text's comparisons.

Taken together, the empirical comparison supports the parameterization choice made in the main text.
The joint parameterization preserves the benefits of penalization on the marginal parameters while maintaining the robustness behaviour of the HD-class estimators across both concordant and discordant contamination.
The classical parameterization, on the other hand, produces a regime in which the HD-class estimators lose their robustness advantage in the discordant corner.

\begin{figure}[htbp]
    \centering
    \includegraphics[width=\textwidth]{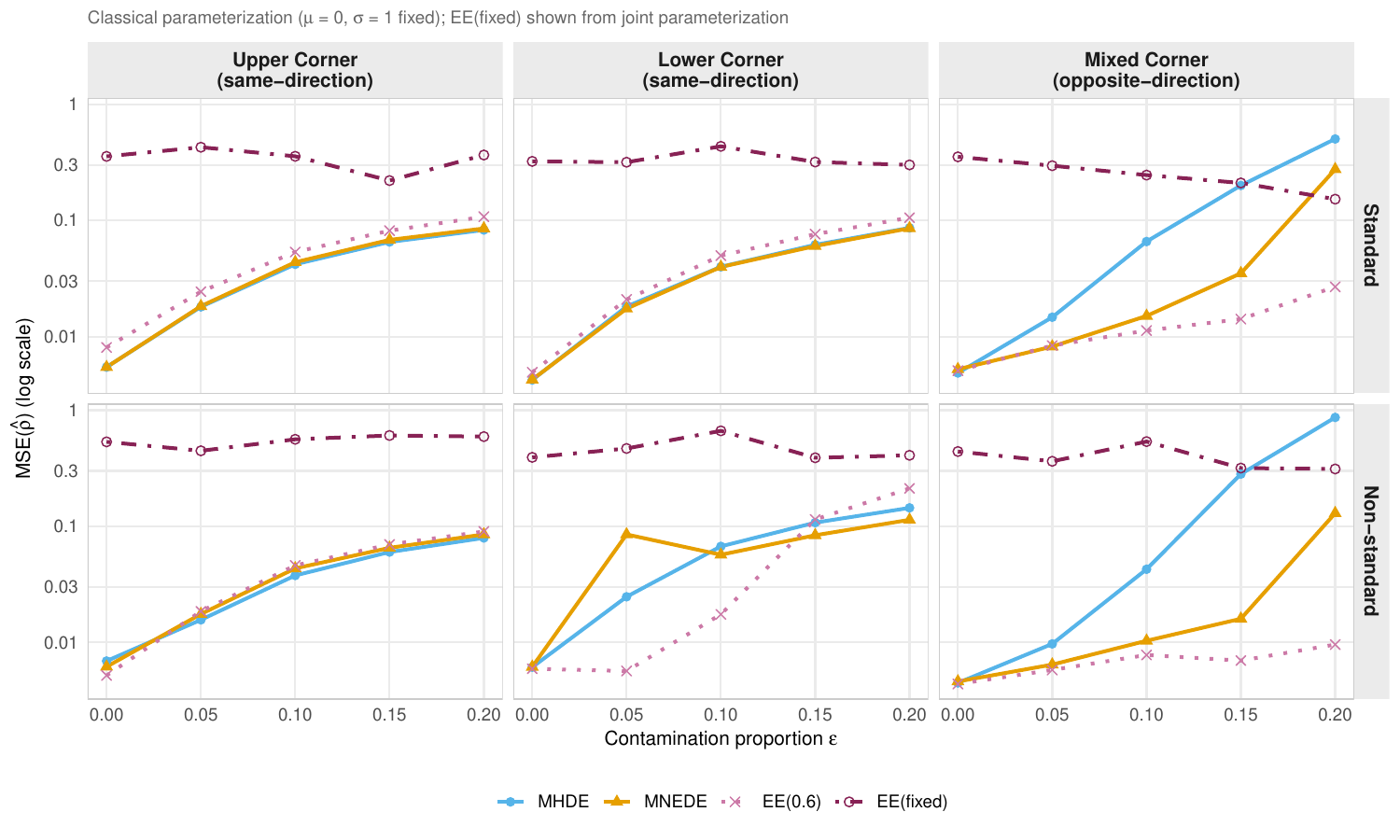}
    \caption{%
      MSE of $\hat{\rho}$ on a logarithmic scale across the six contamination scenarios.
      The estimators MHDE, MNEDE, and EE(0.6) are estimated under the classical parameterization, whereas EE(fixed) is a variant of the E-estimator constructed by combining the bounded score function of \citet{WelzMairAlfons2026} with the alternative parameterization used in the main text.}
    \label{fig:welz_param_comparison}
\end{figure}

\section{Computational Details}
\label{supp:computation}

\subsection{Optimization algorithm}

We minimize the Hellinger distance $D_\text{HD}(f_{\boldsymbol{\psi}}, \hat{g}^w) = \frac{1}{2}\sum_{i,j} (\sqrt{\hat{p}_{ij}^w} - \sqrt{\pi_{ij}(\boldsymbol{\psi})})^2$ using the BFGS quasi-Newton algorithm via R's \texttt{optim()} function, with convergence tolerance $10^{-8}$, maximum 1000 iterations, and numerical stability constant $\epsilon = 10^{-10}$ added to theoretical probabilities before taking square roots.

Starting values are obtained through a two-stage procedure. First, we attempt ML estimation on an outlier-cleaned subset: compute weighted cell frequencies $\hat{p}_{ij}^w$, identify non-sparse cells with $\hat{p}_{ij}^w > 0.05$, and fit the ML estimator using only those cells. The resulting estimates serve as starting values for the full HD optimization. If ML initialization fails, we use default starting values $\theta_i^{(0)} = 0$, $\sigma_i^{(0)} = 1$, and $\rho^{(0)} = 0.5$.

\subsection{Parameter transformation}

During optimization, we transform parameters to ensure they remain in their valid ranges:
\begin{align*}
\theta_i^{(\text{opt})} &= \theta_i \quad (\text{unbounded}) \\
\sigma_i^{(\text{opt})} &= \log(\sigma_i) \quad (\text{ensures } \sigma_i > 0) \\
\rho^{(\text{opt})} &= \text{atanh}(\rho) = \tfrac{1}{2}\log\!\left(\tfrac{1+\rho}{1-\rho}\right) \quad (\text{ensures } |\rho| < 1).
\end{align*}
The inverse transformations are $\theta_i = \theta_i^{(\text{opt})}$, $\sigma_i = \exp(\sigma_i^{(\text{opt})})$, and $\rho = \tanh(\rho^{(\text{opt})})$.

\subsection{Convergence diagnostics}

We verify convergence by checking that the gradient norm satisfies $\|\nabla D_\text{HD}(\hat{\boldsymbol{\psi}})\| < 10^{-6}$, the parameters lie within valid ranges ($\sigma_i \in (0.1, 10)$, $|\rho| < 0.99$), and the Hellinger distance is non-degenerate ($D_\text{HD}(\hat{\boldsymbol{\psi}}) < 0.99$). Replications failing any convergence check are excluded from Monte Carlo summaries.

\subsection{Computational complexity}

The per-iteration cost is $O(K^2)$ for ML, MHDE, MNEDE, and EE (dominated by bivariate normal CDF evaluations for cell probabilities). EE estimation is implemented via the \texttt{robcat} R package \citep{robcat2026}, with the HT-weighted contingency table constructed via \texttt{survey::svytable} \citep{Lumley2024} as input. Penalized variants additionally require an $O(L \times F)$ factor for the lambda grid ($L = 15$ values) and cross-validation folds ($F = 5$). With the survey data generation step, each Monte Carlo replication takes approximately 5--10 seconds per estimator for $K = 5$, $N = 5{,}000$, and $n \approx 500$.

\section{Proof of Theorem~\ref{thm:hd_if}: Influence function of the MHDE functional}
\label{supp:proofs}

\begin{proof}
Let
\[
\mathbf U(\boldsymbol{\psi}, \mathbf p) 
= \nabla_{\boldsymbol{\psi}} D_{\mathrm{HD}}(\boldsymbol{\pi}(\boldsymbol{\psi}), \mathbf p)
\]
denote the population Hellinger estimating equation. By definition of $\boldsymbol{\psi}_g = T(G)$,
$\mathbf U(\boldsymbol{\psi}_g, \mathbf g) = \mathbf 0$.
For point-mass contamination at the cell $y = (i, j)$, write
\[
\mathbf g_{\varepsilon} 
= (1 - \varepsilon)\mathbf g + \varepsilon \mathbf e_{ij}
= \mathbf g + \varepsilon(\mathbf e_{ij} - \mathbf g),
\]
and let $\boldsymbol{\psi}_{\varepsilon} = T(G_{\varepsilon})$ be the corresponding population minimizer.
Then $\mathbf U(\boldsymbol{\psi}_{\varepsilon}, \mathbf g_{\varepsilon}) = \mathbf 0$.
Differentiating this identity with respect to $\varepsilon$ and evaluating at $\varepsilon = 0$ gives
\[
\nabla_{\boldsymbol{\psi}} \mathbf U(\boldsymbol{\psi}, \mathbf g)\Big|_{\boldsymbol{\psi} = \boldsymbol{\psi}_g}
\frac{\mathrm d\, \boldsymbol{\psi}_{\varepsilon}}{\mathrm d\,\varepsilon}\Big|_{\varepsilon = 0}
+ \nabla_{\mathbf p} \mathbf U(\boldsymbol{\psi}_g, \mathbf p)\Big|_{\mathbf p = \mathbf g}
(\mathbf e_{ij} - \mathbf g) = \mathbf 0.
\]
By the definition of $\mathbf A$ and $\boldsymbol{\phi}(i, j)$, this becomes $\mathbf A\,\mathrm{IF}(i, j; T, G) + \boldsymbol{\phi}(i, j) = \mathbf 0$.
Since $\mathbf A$ is nonsingular,
\[
\mathrm{IF}(i, j; T, G) = -\mathbf A^{-1}\boldsymbol{\phi}(i, j).
\]

It remains only to interpret the finiteness claim.
Since $K$ is fixed, the support $\mathcal Y = \{1, \ldots, K\}^{2}$ is finite.
If $g_{ij} > 0$ for all cells and the fitted probabilities $\pi_{ij}(\boldsymbol{\psi}_g)$ are also positive, then the derivatives in $\boldsymbol{\phi}(i, j)$ are finite for every cell.
Hence $\|\mathrm{IF}(i, j; T, G)\| < \infty$ for each fixed cell $(i, j)$.

This finiteness is a pointwise finite-support statement.
It should not be read as uniform boundedness over sparse-cell regimes.
Indeed, the derivative of the Hellinger estimating equation contains inverse square-root factors involving $g_{ij}$ and $\pi_{ij}(\boldsymbol{\psi}_g)$, and these terms can become large when either the population or model-implied cell probability is close to zero.
Thus the influence function is finite for any fixed table with strictly positive cell probabilities, but it need not be uniformly bounded as some cell probabilities approach zero.
\end{proof}

\section{Estimation of Marginal Parameters $\theta_1$ and $\sigma_1$}
\label{supp:nuisance}

The main paper focuses on $\hat{\rho}$, the bivariate parameter of primary interest. Here we examine the corresponding behaviour for the univariate marginal parameters $\theta_1$ and $\sigma_1$, both as MSE summaries (Section~\ref{supp:mse_univ}) and as full sampling distributions across replications (Sections~\ref{supp:box_theta1}--\ref{supp:box_sigma1}). Section~\ref{supp:bv_tradeoff} synthesizes the bias--variance trade-off implied by these results.

\subsection{MSE summaries}
\label{supp:mse_univ}

Figures~\ref{fig:mse_theta1_panel} and~\ref{fig:mse_sigma1_panel} present MSE results for $\theta_1$ and $\sigma_1$ across all six scenarios. EE(0.6) is omitted from these figures because the formulation of \citet{WelzMairAlfons2026} assumes standardized latent variables ($\theta_l = 0$, $\sigma_l = 1$) by design and therefore does not estimate $(\theta_1, \sigma_1)$.

Under standard marginals with concordant contamination (top row, left two columns), penalized MHDE methods---particularly Ridge MHDE and Lasso MHDE---achieve substantially lower MSE for both $\theta_1$ and $\sigma_1$ compared to ML, MNEDE, and unpenalized MHDE. At $\varepsilon = 0.20$, Lasso MHDE achieves MSE$(\theta_1) \approx 0.04$ and MSE$(\sigma_1) \approx 0.02$ compared to ML at $\approx 0.25$ and $\approx 0.13$ respectively. Under mixed contamination, MNEDE and penalized MHDE variants all maintain low MSE, while ML and unpenalized MHDE degrade.

Under non-standard marginal (bottom row), the advantage of penalized MHDE diminishes for $\theta_1$ as penalization toward the wrong target ($\theta = 0$ when truth is $\theta = 0.5$) becomes counterproductive. For $\sigma_1$, penalized MHDE retains some advantage under concordant contamination but loses effectiveness under the compound non-standard marginal contamination effect in the lower corner scenario.

\begin{figure}[htbp]
    \centering
    \includegraphics[width=\textwidth]{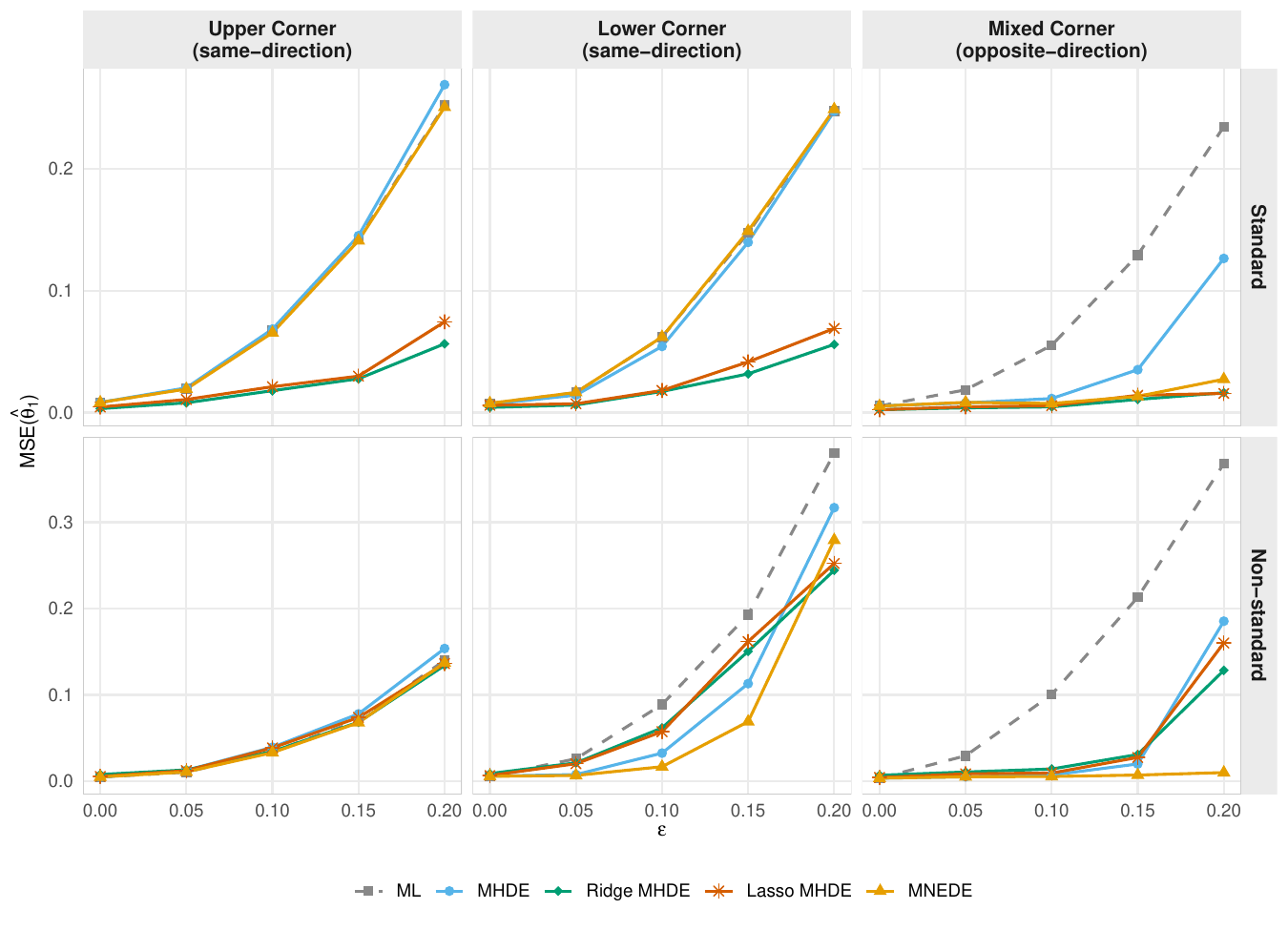}
    \caption{MSE of $\hat{\theta}_1$ across all six scenarios. EE(0.6) is omitted because the \citet{WelzMairAlfons2026} formulation assumes standardized latent variables and does not estimate $\theta_1$. Penalized MHDE methods dominate under standard concordant scenarios; under non-standard marginal, their advantage diminishes as penalization targets the wrong standardization point.}
    \label{fig:mse_theta1_panel}
\end{figure}

\begin{figure}[htbp]
    \centering
    \includegraphics[width=\textwidth]{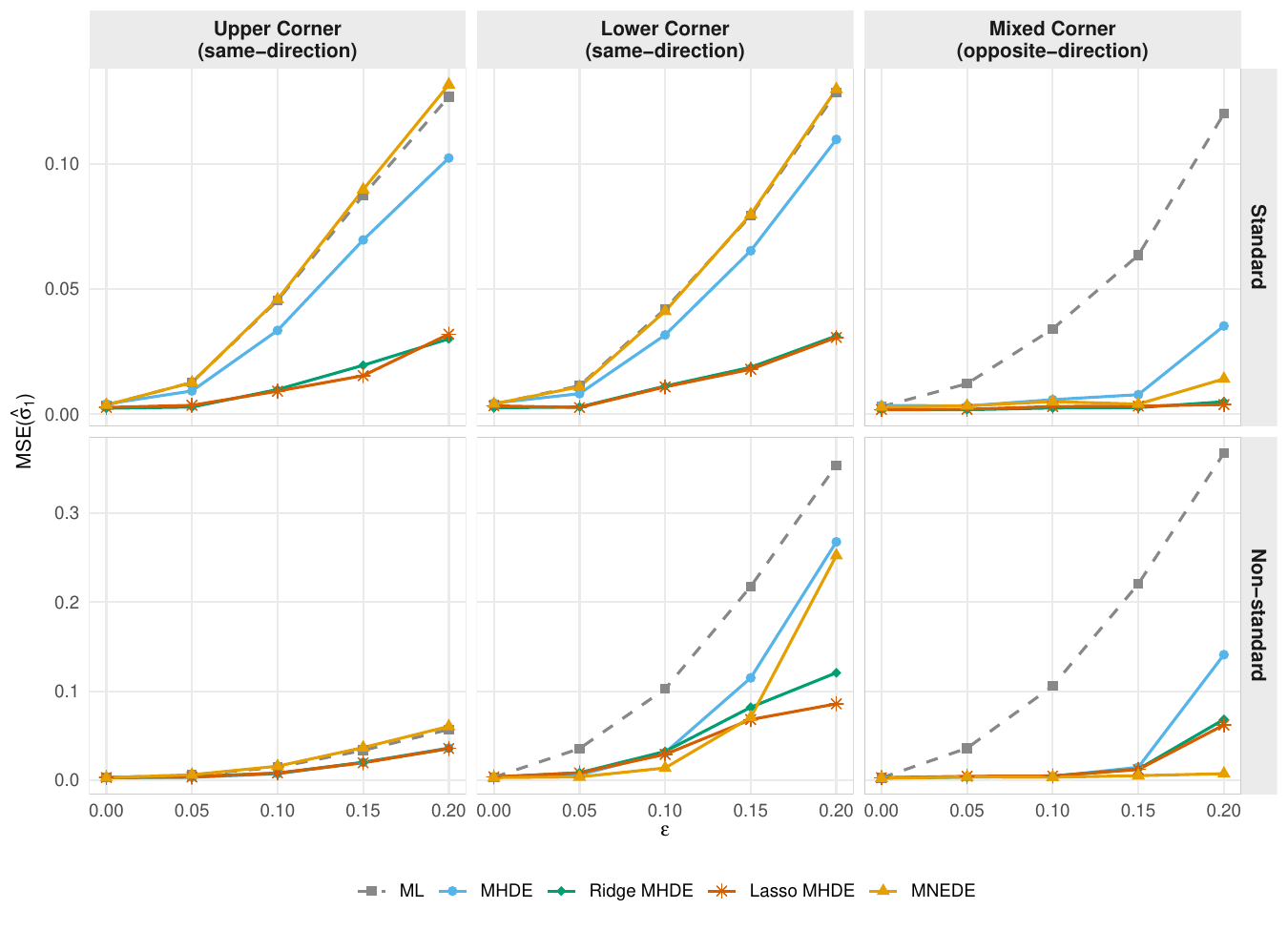}
    \caption{MSE of $\hat{\sigma}_1$ across all six scenarios. EE(0.6) is omitted because the \citet{WelzMairAlfons2026} formulation assumes standardized latent variables and does not estimate $\sigma_1$. Under standard marginals, Lasso MHDE achieves MSE reductions of 80\% or more relative to ML at $\varepsilon = 0.20$ for concordant contamination.}
    \label{fig:mse_sigma1_panel}
\end{figure}

\subsection{Distribution of $\hat{\theta}_1$ across replications}
\label{supp:box_theta1}

MSE summaries collapse the full sampling distribution into bias and variance moments, masking distributional features such as outliers, skewness, and tail behaviour that are relevant when applying an estimator to a single dataset. Figure~\ref{fig:box_theta1} displays the distribution of $\hat{\theta}_1 - \theta_1$ across the 100 Monte Carlo replications at each contamination level for the five HD-, NED-, and ML-based estimators (EE omitted; see Section~\ref{supp:bv_tradeoff}).

Under standard marginals (top row), all estimators are centred near zero at $\varepsilon = 0$, with tight interquartile ranges. As contamination increases, ML and unpenalized MHDE develop systematic bias whose direction reflects the contaminated corner: positive bias under upper corner contamination and negative bias under lower corner contamination, with absolute median bias reaching $0.4$ at $\varepsilon = 0.20$. Penalized MHDE methods (Ridge MHDE and Lasso MHDE) and MNEDE maintain medians close to zero across all $\varepsilon$ values in these concordant scenarios. Under mixed contamination (top right), the bias trends largely cancel and all five estimators maintain medians close to zero, but MNEDE's interquartile range visibly widens with $\varepsilon$ while the HD-based methods stay tight.

Under the non-standard marginal configuration (bottom row, true $\theta_1 = 0.5$), all joint-parameterization estimators recover $\theta_1$ with negligible bias at $\varepsilon = 0$ confirming that estimating $\theta$ and $\sigma$ directly leaves the model correctly specified under shifted and compressed latent marginals. Contamination then perturbs this baseline differently by corner: upper-corner contamination drives $\hat{\theta}_1$ upward (less negative)  for ML and MHDE, lower-corner contamination drives it downward ($\approx -0.55$ to $-0.60$), and mixed-corner contamination, which places contaminated mass in the low category of $Y_1$, likewise produces a downward drift. As for $\hat{\rho}$, MNEDE again exhibits the widest interquartile range among the robust methods, most visibly at intermediate $\varepsilon$.

\begin{figure}[htbp]
    \centering
    \includegraphics[width=\textwidth]{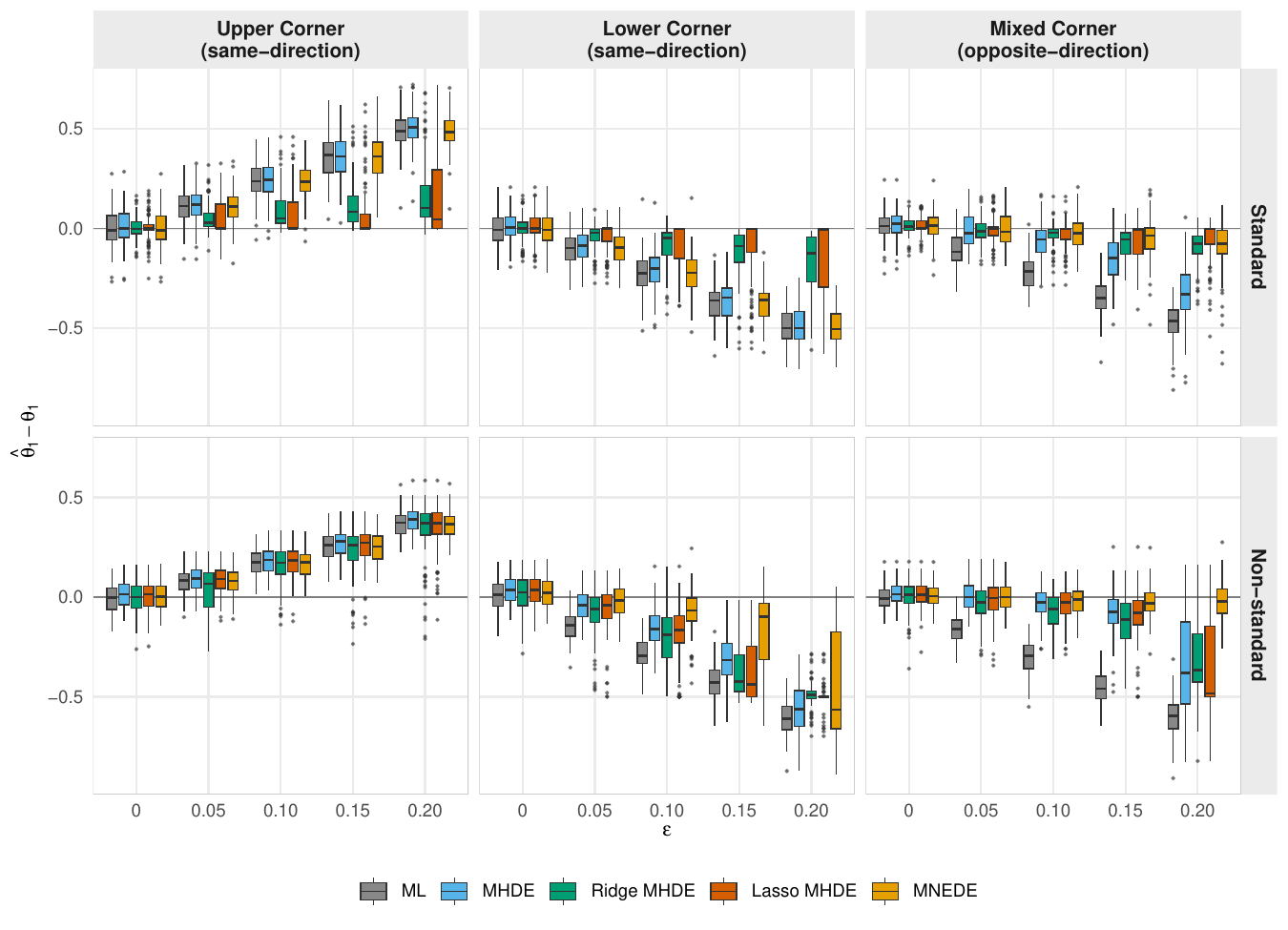}
    \caption{Distribution of $\hat{\theta}_1 - \theta_1$ across 100 Monte Carlo replications for each $\varepsilon$ level, faceted by scenario. Boxes show the interquartile range; whiskers extend to $1.5 \times$ IQR; points beyond are individual replications. EE(0.6) is omitted because the \citet{WelzMairAlfons2026} formulation does not estimate $\theta_1$.}
    \label{fig:box_theta1}
\end{figure}

\subsection{Distribution of $\hat{\sigma}_1$ across replications}
\label{supp:box_sigma1}

Figure~\ref{fig:box_sigma1} shows the corresponding distributions for $\hat{\sigma}_1 - \sigma_1$. The bias is uniformly positive across all estimators and all contaminated settings: corner contamination, regardless of which corner is affected, expands the empirical support of the contingency table relative to the model's expected support, and all estimators compensate by inflating $\hat{\sigma}_1$. The magnitude of positive bias scales with $\varepsilon$, and the ordering by estimator is consistent across scenarios---ML produces the largest bias, MNEDE is intermediate with notably wider interquartile range, and penalized MHDE methods (Ridge MHDE and Lasso MHDE) yield the most compact distributions with the smallest median bias. The pattern holds under both standard and non-standard marginals, although the absolute magnitudes differ.

\begin{figure}[htbp]
    \centering
    \includegraphics[width=\textwidth]{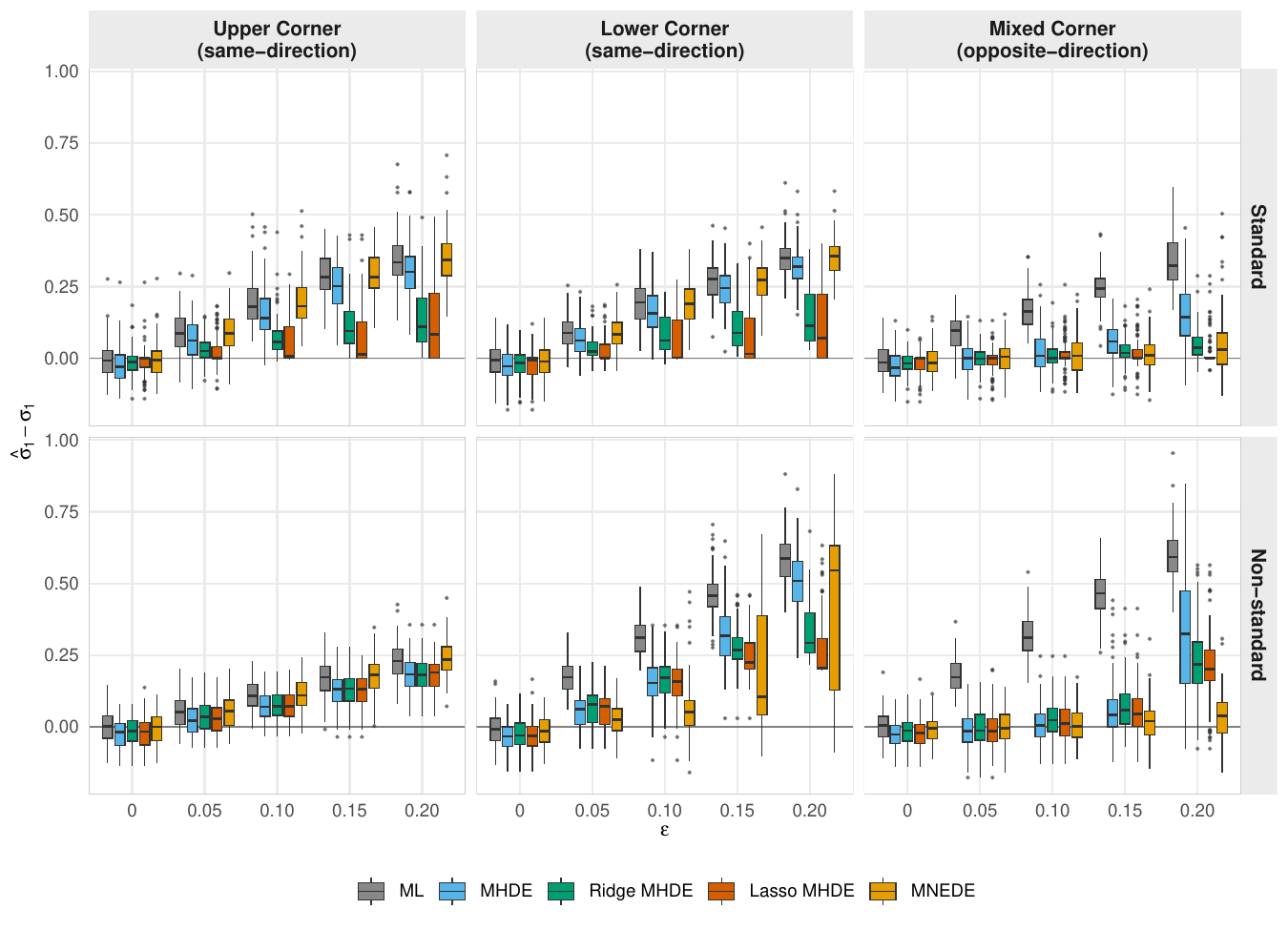}
    \caption{Distribution of $\hat{\sigma}_1 - \sigma_1$ across 100 Monte Carlo replications for each $\varepsilon$ level, faceted by scenario. Boxes show the interquartile range; whiskers extend to $1.5 \times$ IQR; points beyond are individual replications. EE(0.6) is omitted because the \citet{WelzMairAlfons2026} formulation does not estimate $\sigma_1$.}
    \label{fig:box_sigma1}
\end{figure}

\subsection{Bias--variance trade-off}
\label{supp:bv_tradeoff}

The boxplots in Figures~\ref{fig:box_theta1} and~\ref{fig:box_sigma1} reveal a bias--variance trade-off for the marginal parameters that complements, rather than contradicts, the recommendations made in the main text for $\hat{\rho}$. Three observations support this characterization. First, penalized MHDE methods maintain the tightest interquartile ranges across all scenarios, reflecting the variance-stabilizing effect of nuisance-parameter shrinkage. Second, MNEDE, which attains competitive or better MSE for $\hat{\rho}$ under mixed contamination and the compound non-standard marginal contamination effect, exhibits noticeably wider spread for both $\hat{\theta}_1$ and $\hat{\sigma}_1$ than the MHDE variants, most visibly at intermediate contamination levels ($\varepsilon = 0.10$--$0.15$). Third, the magnitude of the trade-off scales with contamination severity: at $\varepsilon = 0$ all estimators have comparable spread, while by $\varepsilon = 0.20$ the IQR of MNEDE can exceed that of Lasso MHDE by a factor of two or more for the marginal parameters.

This pattern is consistent with MNEDE's residual-adjustment mechanism.
The negative-exponential RAF bounds the contribution of cells with large positive Pearson residuals, thereby reducing the effect of cells whose HT-weighted empirical frequency substantially exceeds the fitted model probability.
This protection can also reduce the contribution of legitimate sampling fluctuations in cells with large residuals, increasing variability of the marginal parameter estimates in some scenarios.
In practical terms, the choice between Lasso MHDE and MNEDE for $\hat{\rho}$ does not directly translate to the marginal parameters: under conditions where MNEDE achieves competitive or better MSE for $\hat{\rho}$, Lasso MHDE continues to provide more stable estimates of $(\theta_1, \sigma_1)$.
Researchers whose inferential targets include the marginal parameters---or whose downstream applications rely on the joint distribution rather than $\rho$ alone---should weigh this trade-off when selecting an estimator.

\section{Survey-Weighted Inference and Comparison with i.i.d.\ Weighting}
\label{supp:weighted_ee}

\subsection{Design-based versus i.i.d.\ variance formulas}
\label{supp:variance_compare}
The asymptotic covariance derived in Theorem~\ref{thm:hd_asymptotic} takes the sandwich form
\[
\mathbf A^{-1}\mathbf B_d\mathbf A^{-\top},
\qquad
\mathbf B_d=\mathbf J_g\mathbf V_d\mathbf J_g^\top,
\]
where \(\mathbf V_d\) is the limiting design-adjusted covariance matrix of the ratio-normalized HT cell-frequency vector.
Thus the sampling design enters through \(\mathbf V_d\).

An approach that substitutes the weighted empirical frequencies $\hat{p}_{ij}^w$ directly into a loss function derived under i.i.d.\ sampling and then applies the i.i.d.\ variance formula instead yields a covariance that ignores the design contribution.

To compare the two formulas concretely, write the score under HT weighting as $\boldsymbol{U}_n = n^{-1/2}\sum_{k \in S} w_k \boldsymbol{\phi}(Y_k)$. Under Poisson-PPS sampling, the design-based variance of this score is
\[
\mathrm{Var}_{\mathrm{design}}(\boldsymbol{U}_n) = \frac{1}{n}\sum_{k \in U}\frac{1 - P(k \in S_\gamma)}{P(k \in S_\gamma)}\,\boldsymbol{\phi}(Y_k)\boldsymbol{\phi}(Y_k)^\top,
\]
where $P(k \in S_\gamma)$ is the inclusion probability of unit $k$.
The i.i.d.\ variance formula treats the weighted frequencies as if drawn from a multinomial under a single distribution $g$:
\[
\mathrm{Var}_{\mathrm{iid}}(\boldsymbol{U}_n) = \mathbb{E}_g\!\left[\boldsymbol{\phi}(Y)\boldsymbol{\phi}(Y)^\top\right].
\]
The two formulas agree only when all $\pi_k$ are equal (simple random sampling); otherwise, $\mathrm{Var}_{\mathrm{design}}$ exceeds $\mathrm{Var}_{\mathrm{iid}}$ whenever the weights $w_k = 1/P(k \in S_\gamma)$ have positive variance.

The discrepancy is quantified by the \emph{design effect}
\[
\mathrm{DEFF} = \frac{\mathrm{Var}_{\mathrm{design}}(\hat{\boldsymbol{\psi}})}{\mathrm{Var}_{\mathrm{iid}}(\hat{\boldsymbol{\psi}})},
\]
which to first order satisfies $\mathrm{DEFF} \approx 1 + \mathrm{CV}^2(w)$ for estimators of population means \citep{Kish1965}. The effective sample size $n_\mathrm{eff} = (\sum_k w_k)^2 / \sum_k w_k^2$ captures the same quantity through the equivalence $\mathrm{DEFF} = n / n_\mathrm{eff}$.

In the simulation design considered in this paper, the weight coefficient of variation is $\mathrm{CV}(w) \approx 1.26$ (Section~\ref{supp:design}), corresponding to $\mathrm{DEFF} \approx 2.5$. Standard errors computed under the i.i.d.\ formula would therefore be approximately $\sqrt{2.5} \approx 1.58$ times smaller than the correct design-based standard errors, and confidence intervals constructed from them would undercover the nominal level by a corresponding margin. The design-aware variance in Theorem~\ref{thm:hd_asymptotic} eliminates this discrepancy through the explicit dependence on the inclusion probabilities.

\subsection{Survey-weighted versus unweighted E-estimation: point estimates}
\label{supp:weighted_ee_points}

A separate question is whether the point estimates of EE differ when the contingency table is supplied with or without survey weights. The E-estimator of \citet{WelzMairAlfons2026} is formulated for an unweighted contingency table from i.i.d.\ ordinal observations. To accommodate complex survey designs, we instead supply the HT-weighted contingency table constructed via \texttt{survey::svytable} \citep{Lumley2024}, scaled and rounded to integers. Because the E-estimator's loss depends only on the relative empirical frequencies $\hat{p}_{ij}^w / \pi_{ij}(\boldsymbol{\psi})$, this preserves design-weighted point estimation; but standard errors reported by \texttt{robcat} do not account for survey weights and are therefore not used.

For comparison, we also applied the unweighted variant of EE---supplying the raw contingency table $\sum_{k \in S} \mathbf{1}(Y_{1k} = i, Y_{2k} = j)$ without weights---and Table~\ref{tab:ee_weighted_vs_unweighted} reports the MSE for $\hat{\rho}$ at $\varepsilon = 0.20$ across all six scenarios.

\begin{table}[htbp]
\centering
\caption{MSE of $\hat{\rho}$ at $\varepsilon = 0.20$ for survey-weighted versus unweighted E-estimation across the six simulation scenarios.}
\label{tab:ee_weighted_vs_unweighted}
\small
\begin{tabular}{llccc}
\toprule
\textbf{Marginals} & \textbf{Cont.} & \textbf{Weighted EE(0.6)} & \textbf{Unweighted EE(0.6)} & \textbf{Difference} \\
\midrule
Standard & Upper & 0.110 & 0.079 & $+0.031$ \\
Standard & Lower & 0.103 & 0.110 & $-0.008$ \\
Standard & Mixed & 0.024 & 0.013 & $+0.010$ \\
\midrule
Non-standard & Upper & 0.092 & 0.059 & $+0.034$ \\
Non-standard & Lower & 0.089 & 0.046 & $+0.042$ \\
Non-standard & Mixed & 0.010 & 0.004 & $+0.006$ \\
\bottomrule
\end{tabular}
\end{table}

Two patterns emerge, with the gap between weighted and unweighted EE driven by different MSE components in different contamination regimes.
Under concordant contamination, the survey-weighted EE has higher MSE than its unweighted counterpart, and decomposing MSE into squared bias and variance reveals that the increase is driven primarily by bias.
The mechanism is straightforward: PPS weights $w_i = 1/P(i \in S)$ are determined by the auxiliary variable $X_i$ used in the inclusion probability formula, independent of the contamination mechanism. When a high-weight unit is contaminated and reassigned to the concordant corner cell $(K, K)$ or $(1, 1)$, that unit contributes proportionally more weighted mass to the contaminated cell than its unweighted counterpart, amplifying the bias of any estimator that does not engage robust downweighting in this regime---including EE under concordant contamination, where Pearson residuals stay close to one and the E-estimator's hard-cutoff is not triggered.

Under mixed-corner contamination, the survey-weighted EE also has marginally higher MSE than its unweighted counterpart, but for a different reason. Both variants achieve very low bias because the E-estimator's downweighting mechanism engages fully in this regime, so MSE is dominated by sampling variance rather than bias. The increase is driven by the standard design effect of unequal PPS weights (with weight coefficient of variation around $1.26$ in our simulations), which inflates the variance of the weighted estimator relative to the unweighted variant. This is a well-known precision cost of incorporating design weights and is independent of the contamination geometry.

We adopt the survey-weighted variant as our primary results because it is consistent with the design-aware treatment of the other estimators in this paper. The unweighted variant is reported here for transparency and to illustrate that point estimation under complex survey designs is sensitive to whether weights are incorporated into the contingency table input.

\subsection{Survey Design Diagnostics}
\label{supp:design}

We report two diagnostics for the Poisson-PPS sampling design: the achieved sample size $n$ and the design efficiency, measured by the effective sample size $n_\text{eff} = (\sum_i w_i)^2 / \sum_i w_i^2$ and the coefficient of variation of the weights $\text{CV}(w) = \text{sd}(w_i)/\bar{w}$.

Table~\ref{tab:design_diagnostics} and Figure~\ref{fig:design_diag} summarize these quantities across contamination levels. The effective sample size $n_\text{eff}$ captures the variance inflation due to unequal weights: a ratio $n_\text{eff}/n < 1$ indicates precision loss relative to a simple random sample. The weight CV quantifies the degree of unequal weighting; larger values correspond to more variable inclusion probabilities, which amplify the influence of contaminated units that happen to carry high weights. Contamination is applied after sampling and does not affect the sampling design itself; results are therefore averaged over all six scenarios.

\begin{table}[htbp]
\centering
\caption{Summary of survey design diagnostics under Poisson-PPS sampling ($N = 5{,}000$, $n \approx 500$, $\rho_{YZ} = 0.25$). Values are means across 100 replications per $\varepsilon$ level, averaged over all six scenarios.}
\label{tab:design_diagnostics}
\begin{tabular}{ccccc}
\toprule
$\varepsilon$ & Mean $n$ & Mean $n_\text{eff}$ & $n_\text{eff}/n$ & Mean CV$(w)$ \\
\midrule
0.00 & 496 & 194 & 0.391 & 1.26 \\
0.05 & 495 & 194 & 0.392 & 1.26 \\
0.10 & 497 & 195 & 0.392 & 1.26 \\
0.15 & 493 & 193 & 0.392 & 1.27 \\
0.20 & 497 & 195 & 0.393 & 1.27 \\
\bottomrule
\end{tabular}
\end{table}

\begin{figure}[htbp]
    \centering
    \includegraphics[width=\textwidth]{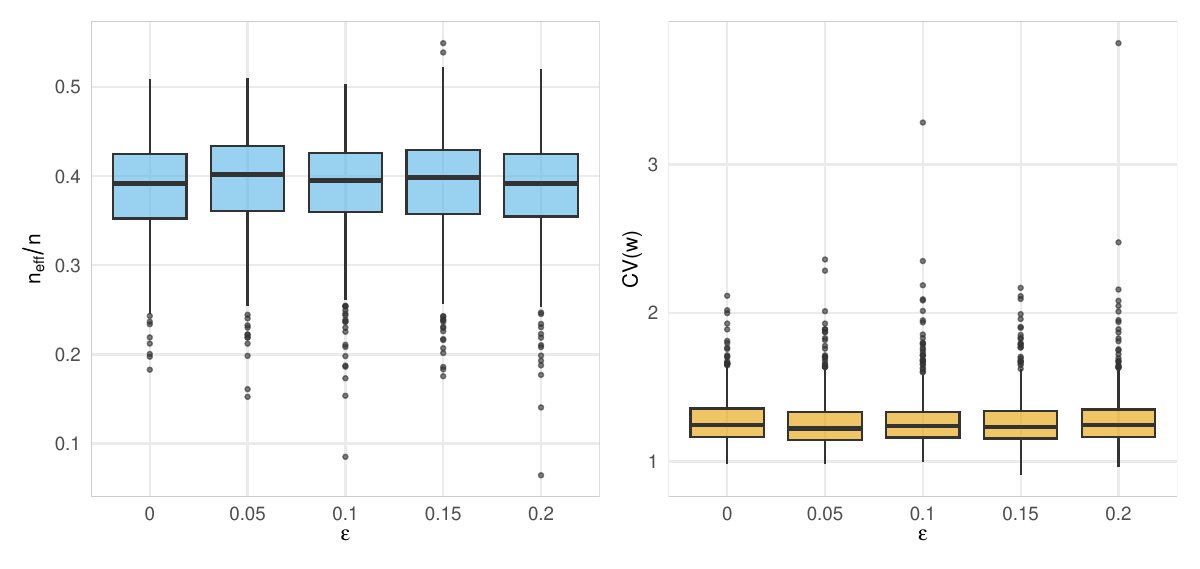}
    \caption{Survey design diagnostics. Left panel: distribution of effective sample size ratio $n_\text{eff}/n$ by $\varepsilon$. Right panel: distribution of weight coefficient of variation CV$(w)$ by $\varepsilon$.}
    \label{fig:design_diag}
\end{figure}

\section{Compound Non-standard Marginal Contamination Effect on $\hat{\rho}$}
\label{supp:compound}

Figure~\ref{fig:misspec_lower_detail} provides a detailed view of $\hat{\rho}$ under the non-standard marginals with lower corner contamination, where EE(0.6) and MNEDE both substantially outperform the HD-based variants. EE(0.6) achieves the lowest MSE across all $\varepsilon > 0$ with MNEDE close behind. The left panel shows that EE(0.6) and MNEDE both maintain substantially lower bias than the HD-based methods across all $\varepsilon$; the right panel shows that the gap with HD-based variants persists throughout the contamination range. The compound effect arises because non-standard marginal makes the assumed model's $\pi_{11}$ smaller than the true value, amplifying HD's $1/\sqrt{\pi}$ sensitivity to contamination in this cell, while MNEDE's exponential suppression and EE's sharp-cutoff downweighting both bypass the $1/\sqrt{\pi}$ scaling and remain effective regardless of non-standard marginal induced sparsity.

\begin{figure}[htbp]
    \centering
    \includegraphics[width=\textwidth]{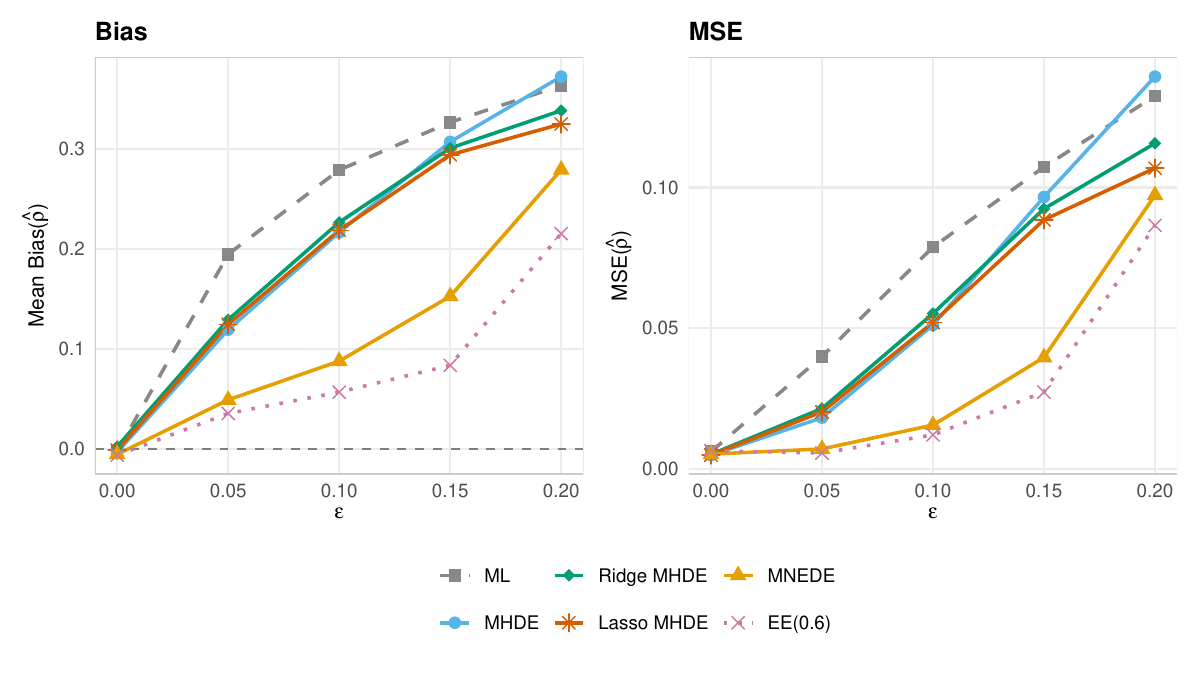}
    \caption{Bias (left) and MSE (right) of $\hat{\rho}$ under non-standard marginals with lower corner contamination. EE(0.6) achieves the lowest MSE across all $\varepsilon$ values, with MNEDE second-best, both outperforming the HD-based variants substantially due to the compound non-standard marginal contamination effect.}
    \label{fig:misspec_lower_detail}
\end{figure}

\end{document}